%
%
%
%
%
\RequirePackage{fix-cm}
\documentclass[smallextended]{svjour3}       
\smartqed  
\usepackage{graphicx}
\usepackage{balance}  
\usepackage{lipsum}
\usepackage{tabularx}
\hyphenation{op-tical net-works semi-conduc-tor}
\usepackage{graphicx}
\usepackage{epstopdf}
\usepackage{psfrag}
\usepackage{color}
\usepackage{filecontents}
\usepackage{cite}
\usepackage{mathtools}

\usepackage{amssymb,amsmath,amsfonts}

\usepackage{float}
\usepackage{algorithm}
\usepackage[noend]{algpseudocode}

\usepackage{tikz}
\usepackage{ifthen}
\usepackage{subfigure}
\usepackage{overpic}
\usepackage{pict2e}

\usepackage{caption}
\usepackage{subfigure,float}
\usepackage{footnote}
\usetikzlibrary{matrix,positioning,decorations.pathreplacing}

\DeclareMathOperator{\Mcol}{\ell}
\DeclareMathOperator{\Mrow}{\ell}
\DeclareMathOperator{\Mnull}{0}

\usepackage{changepage}

\newcommand{\norm}[1]{\left\lVert#1\right\rVert}
\newcommand{\algo}{{\sc LRC-Katz}}
\newcommand{\algoLink}{{\sc Sparse-Katz}}

\newcommand{\boldG}{\mathbf{G}}
\newcommand{\Katzmatrix}{\mathbf{K}}
\newcommand{\Identitymatrix}{\mathbf{I}}
\newcommand{\Katzvector}{\mathbf{k}_q}              
\newcommand{\BigM}{\mathbf{M}}                      
\newcommand{\BigS}{\mathbf{S}}                      
\newcommand{\network}{\cal{G}}
\newcommand{\nodeset}{\cal{V}}
\newcommand{\edgeset}{\cal{E}}
\newcommand{\netW}{\mathbf{W}}  
\newcommand{\PositiveSet}{{\cal{P}}}



\newcommand{\K}{{k}}

\newcommand{\residualArnoldi}{{\mathbf{l}}}

\newcommand{\BigMeasy}{{\textbf{$\BigM_{11}$}}}
\newcommand{\BigMright}{{\textbf{$\BigM_{12}$}}}
\newcommand{\BigMleft}{{\textbf{$\BigM_{12}^T$}}}
\newcommand{\BigMhard}{{\textbf{$\BigM_{22}$}}}

\newcommand{\SchurComplement}{{\textbf{${\mathbf{S}} = \BigMhard - \BigMleft {\mathbf{M}}_{11}^{-1} \BigMright$}}}


%
%
%
%
%
\begin{document}

\title{Fast Computation of Katz Index for Efficient Processing of Link Prediction Queries
}

\titlerunning{Fast Katz for Efficient Link Prediction Queries}        

\author{Mustafa Co\c{s}kun \and
        Abdelkader Baggag  \and
        Mehmet Koyut\"urk
}


\institute{Mustafa Co\c{s}kun \at
              Department of Computer Engineering\\
              Abdullah G\"ul University \\
              Kayseri,Turkey\\
              Tel.: +905-05-0082739\\
              \email{mustafa.coskun@agu.edu.tr}           
           \and
           Abdelkader Baggag \at
              Qatar Computing Research Institute\\
              Hamad Bin Khalifa University \\
              Doha, Qatar\\
              Tel.: +971-4454-7250\\
              \email{abaggag@hbku.edu.qa}
          \and
         Mehmet Koyut\"urk \at
         Department of Electrical Engineering and Computer Science\\
         Case Western Reserve University\\
         Cleveland, USA\\
         Tel.: +121-63-682963\\
         \email{mehmet.koyuturk@case.edu}
}

\date{Received: date / Accepted: date}

\maketitle
\begin{abstract}
Network proximity computations are among the most common operations in various data mining applications, including link prediction and collaborative filtering. 
A common measure of network proximity is Katz index, which has been
shown to be among the best-performing path-based link prediction algorithms. 
With the emergence of very large network databases, such proximity computations become an important part of query processing in these databases.
Consequently, significant effort has been devoted to developing algorithms for efficient computation of Katz index between a given pair of nodes or between a query node and every other node in the network. 
%
Here, we present \algo, an algorithm based on indexing and low rank correction to accelerate Katz index based network proximity queries. 
Using a variety of very large real-world networks, we show that \algo\
outperforms the fastest existing method, Conjugate Gradient, for a wide range of parameter values. 
We also show that, this acceleration in the computation of Katz index can be used to drastically improve the efficiency of processing link prediction queries in very large networks. 
Motivated by this observation, we propose a new link prediction algorithm that exploits modularity of networks that are encountered in practical applications. 
Our experimental results on the link prediction problem shows that our modularity based algorithm  significantly outperforms state-of-the-art link prediction Katz method.
\keywords{Fast Katz Method \and Link Prediction \and Network Proximity}
\end{abstract}
\section{Introduction}
\label{sec:introduction}
Proximity measure computation for nodes in networks is a well-adapted operation in many data analytic
applications. In link prediction or recommender systems, network proximity measures the node
similarity in social networks ~\cite{LinkLiben,RecommedationSysSarkar}. In information retrieval, anomalous link are ranked based on nodes proximity to other
nodes~\cite{ALD}. In unsupervised learning, the network proximity quantify a cluster quality ~\cite{clusters}.

The general setting for network proximity queries is as follows: Given a query node and some network distance measurement, we aim at computing a score for all other nodes in the network, based on their
proximity with respect to the query node. For instance, these
measures of network distance include shortest path, which computes the minimum number of edges between two nodes, and Katz-based proximity, which can be defined as a measure in between any pair of nodes in a network that capture their relationship due to the various paths connecting these two pair of nodes. In many applications, Katz-based proximity are preferable to shortest path distance, since it captures the
global structure of a network and multi-paths relationships of the
nodes~\cite{Katz,FastKatz}. Katz-based proximity measures
have been used in a number of applications, including link prediction ~\cite{LinkLiben},
clustering~\cite{clusters}, and ranking \cite{ALD}.

Motivated by social network data mining problems for instance
link prediction and collaborative filtering, significant efforts have been devoted to reducing computational costs associated Katz-based proximity. These efforts typically speed-up computations
by taking one of the following approaches: (i) exploiting numerical properties of iterative methods, along with structural characteristics of the underlying networks 
to speed up query processing; (ii) avoiding iterative computations during query processing by inverting the underlying system of equations using Cholesky factorization and storing
the resulting factorization as an index. For instance, in the context of top-$k$ Katz-based proximity queries, the state-of-the-art
methods~\cite{FastKatz } use breadth-first
ordering of the nodes in the network to bound element-wise increments of proximity scores by exploiting a relationship between
the Lanczos process and a quadrature rule in the iterative computation \cite{FastKatz}. This bounding process eliminates nodes whose proximity values cannot exceed those of the nodes that are already among the top-$\K$ most proximate to the query node by only entering some part of the underlying network. Likewise, using Cholesky factorization of the underlying linear system, top-$\K$ proximity computations can be computed efficiently~\cite{SaadBook}. These approaches have been demonstrated
to yield significant improvement in computation time, however, their application to larger networks is limited. Specifically, for very large networks, iterative methods~\cite{FastKatz, SaadBook}
require a large number of iterations to converge. More concretely, Fast-Katz \cite{FastKatz} offers a tight 
convergence upper bound ,however, Conjugate Gradient (CG) performs better than Fast-Katz \cite{FastKatz} for Katz-based proximity. On the other hand, direct inversion
techniques ~\cite{SaadBook} are not scalable to large matrices,
since the inverse of a sparse matrix is usually dense.

In this paper, we propose a hybrid approach that partitions the network into disjoint sub-networks and inverts the small matrices corresponding to these subnetworks. The denser matrix,
composed of the nodes connecting these subnetworks is handled through a low rank corrected iterative CG method. By inverting small block diagonal matrices, the hybrid procedure
overcomes the memory constraint of direct inversion techniques. By performing an iterative low rank corrected
procedure on a smaller matrix, it overcomes
the computational cost considerations of classical CG method. We first use graph partitioning ~\cite{karypis1998parallel} to partition the matrix into a sparse
block-diagonal matrix and a dense, but much smaller matrix. And, we index the inverse of the sparse block diagonal matrix through a computationally efficient
procedure. We then propose an efficient low rank correction technique along with CG, to speed up the iterative process for the dense matrix.
At query time of Katz-based proximity, we first solve the dense part of the system using the low rank corrected iterative solver,
and use the stored index (of the block diagonal matrix) to solve the rest of the system.

We describe these processes in detail, and show that the resulting hybrid approach converges much faster than classical CG. Our hybrid approach significantly accelerates the computation of
Katz-based proximity queries for very large graphs.
We provide a detailed theoretical justifications for our results and experimentally show the superior performance of our method on a number of real-world networks.
Specifically, we show that our method yields over at least 3-fold improvement in the runtime of online query processing
over the best state-of-the-art method, CG across all our experiments. 

Moreover, we also show that, low rank corrected Katz-based proximity can be used to drastically improve the efficiency of processing link prediction queries in very large networks. 
Motivated by skewed distribution of Katz scores \cite{FastKatz}, we propose a new link prediction algorithm, called \algoLink\, that exploits modularity of networks that are encountered in practical applications. 
Our experimental results on the link prediction problem shows that our modularity based algorithm  significantly outperforms state-of-the-art link prediction Katz method.

In summary, the two main contributions of the proposed framework are the following:
\begin{itemize}
\item We introduce a hybrid approach to indexing-based acceleration of Katz-based network proximity queries, in which the network is divided into two components, where the larger and sparser part of the resulting system is solved by indexing the inverse of the corresponding matrix, and the smaller and denser part of the system is solved during query processing using proposed low rank corrected CG method. 
\item	We introduce another link prediction algorithm that renders Katz-based link prediction approach more effective by exploiting the skewed distribution of Katz scores.
\end{itemize}

Taken together, these two contributions bring the field closer to real-time processing of proximity queries as well as link prediction task on very large networks.

The rest of the paper is organized as follows: in the next section, we provide a
review of the literature on efficient processing of Katz-based network proximity and link prediction.
In Section~\ref{sec:methods}, we define Katz-based proximity and link prediction problem, and describe our method, along with its theoretical justifications. In Section~\ref{sec:results},
we provide detailed experimental assessments of our method on very large networks for both Katz based proximity and its link prediction task.
We conclude our discussion and summarize avenues for future research in Section~\ref{sec:conclusion}.

\section{Related Work}
\label{sec:related}
Node proximity queries have been soaring significant research attention in recent years in the
context of searching, ranking, clustering and analyzing network structured object similarity~\cite{coskun18}. In particular, Katz-based node proximity queries in
large graph have been well studied ~\cite{FastKatz}.
One of the commonly used approaches to computing Katz-base based proximity is the power method through the Neumann series expansion of the underlying linear systems of equations ~\cite{SaadBook}.

An alternate approach to power iterations is the use of offline computation, which directly inverts the underlying linear system of equations, typically using Cholesky factorization or
eigen-decomposition~\cite{Acar, Tracable, Wang}. These methods
tend compute network proximity rapidly, using a single matrix vector multiplication,
however, they involve in some expensive preprocessing, and their memory requirements constrain
their use to smaller networks.In contrast, we here propose to compute an approximate solution to linear system of equations
by approximating the inverse of Schur Matrix with low rank correction approach, and use
this low rank approximation as a preconditioner in CG to compute the exact solution at query time. 
In doing so, we significantly improve on both memory and storage requirements for indexing as well as runtime of proximity computation.

There have also been extensive efforts aimed at scaling top-$\K$ proximity queries to large sparse
networks for Katz-based proximity. These methods
utilize the topology of the network to perform a local search around the query node, based on the premise that nodes with high random-walk based
proximity to the query node are also close to the query node in terms of the number of hops by exploiting a relationship between
the Lanczos process and a quadrature rule in the iterative computation ~\cite{FastKatz}.
However, these local search based methods for Katz proximity computation are not as efficient as CG method.

In this paper, we focus on exact computation of Katz proximity in very large networks using a novel, hybrid approach accelerated with low rank correction. Our method is fundamentally different from existing approaches in that it simultaneously targets scalability and efficiency. Our method can be used for efficiently computing
Katz-based proximity of all nodes in a undirected network. Moreover, we also develop a new link prediction algorithm that takes advantages of sparse distribution Katz scores and drastically improves the effectiveness of Katz-based link prediction approach.

\section{Methods}
\label{sec:methods}



In this section, we first define Katz-index and formulate node proximity queries based on Katz index.
We then describe our approach to indexing, which is based on graph-partitioning indexing, i.e., to partition the resulting linear system and to index the sparser part of the system. 
Subsequently, we discuss how the iterative computation can be accelerated using low rank correction to refine the solution, and solve the remaining part of the linear system. 
Finally, we discuss how these two approaches can be used in combination, to efficiently process Katz-based network proximity queries. 
The workflow of the proposed framework is shown in Figure~\ref{fig:mainFigure}.

\subsection{Katz Index}
Let ${\network} ={({\nodeset},{\edgeset})}$ be an undirected and connected network, where 
$\nodeset$ denotes the set of $ | \nodeset | $ nodes and $\edgeset$ denotes the set of edges, with sizes indicated as $ | \nodeset | $ and $ | \edgeset | $, respectively.
Katz index quantifies the proximity between a pair of nodes in this network as the weighted sum of 
all paths connecting the two nodes, where the weights of the paths decay exponentially with path length~\cite{Katz,FastKatz}.  
Namely, for a pair of nodes $i$ and $j \in \nodeset$, the Katz index is defined as:
\begin{equation}
K_{i,j} = \sum_{l=1}^{\infty} \alpha^{l} {paths}_{l}(i,j) ,
\label{eq:KatzPath}
\end{equation}
where ${paths}_{l}(i,j)$ denotes the number of paths of length $l$ connecting  $i$ and $j$ in $\network$.
The parameter $\alpha$ is a damping factor that is used to tune the relative importance of longer paths, where $0 < \alpha < 1$, thus smaller $\alpha$ assigns more importance to shorter paths~\cite{FastKatz}. 

 By observing the relationship between the number of paths in $\network$ and the powers of the adjacency matrix $\boldG$ of $\network$, of size $ | \nodeset | \times | \nodeset | $, the computation of Katz index can be formulated as an algebraic problem. 
 To see this, let $\Katzmatrix$ denote the \textit{Katz matrix}, i.e., the matrix of Katz indices between all pairs of nodes in $\network$. 
 
 Since $({\boldG}^{l})_{i,j}$ is equal to the number of paths of length $l$ between $i$ and $j$,  $\Katzmatrix$ can be written as:
\begin{equation}
{\Katzmatrix} = \sum_{l=1}^{\infty}{\alpha^{l} {\boldG}^{l}} = {(\Identitymatrix - \alpha {\boldG})}^{-1} - {\Identitymatrix} ,
\label{eq:KatzMatrix}
\end{equation}
where $\Identitymatrix$ is the identity matrix. 
In the rest of our discussion, we assume that $\alpha < 1/ {\norm{\boldG}}_2$   
to ensure that ${(\Identitymatrix - \alpha {\boldG})}$ is positive definite, and to guarantee the convergence of the Neumann series to the inverse of ${(\Identitymatrix - \alpha {\boldG})}$. 

For a given query node $q$, the Katz index of every other node in $\network$ with
respect to $q$ is given by the $q$th column of $\Katzmatrix$, which we denote $\Katzvector$. 
Observe that the computation of $\Katzvector$ corresponds to solving the following  linear system:
\begin{equation}
\BigM  \Katzvector = \alpha {\mathbf{g}}_q. 
\label{eq:LinearSystem}
\end{equation}
Here, $\BigM = (\Identitymatrix - \alpha \boldG)$ and 
${\mathbf{g}}_q$ is the $q$th column of the adjacency matrix $\boldG$.
Since $\BigM$ does not depend on the query node $q$, it can be inverted 
offline, and the inverse can be used as an index to compute ${\Katzvector} = \alpha \BigM^{-1} {\mathbf{g}}_q$ by performing a single matrix vector multiplication
during query processing.
However, inverting $\BigM$ and storing ${\BigM}^{-1}$ is not feasible for very large graphs, since the inverse of a general sparse matrix is dense. 
Therefore, existing algorithms solve this linear system of equations using an iterative solver such as the (preconditioned) Conjugate Gradient method~\cite[p. 196]{SaadBook}, which is applicable in this case since $\BigM$ is symmetric ~\cite{FastKatz}. 
While these iterative methods greatly accelerate the computation of Katz index, they are not fast enough to enable real-time query processing in very large graphs.

Recently, to enable efficient processing of random-walk based queries on billion-scale networks, we have developed {\sc I-Chopper}, a hybrid method that uses a combination of 
indexing and accelerated iterative solvers~\cite{coskun18}.
In the following subsection, we show how a similar idea can be applied to the processing of Katz index based queries by indexing the inverse of the matrix that corresponds to sparser parts of the network and performing low-rank correction at query time to obtain an exact solution for the rest of the network. 
We then describe how the resulting algorithm, \algo, can be used to efficiently perform
Katz index based link prediction.

\subsection{Graph Partitioning Based Indexing}
As in {\sc I-Chopper}, \algo\ exploits the sparsity of real-world networks to efficiently compute and index the inverse of a large part of $\BigM$. 
The key insight behind this approach is that, although the inverse of a general sparse matrix is dense, the inverse of a block-diagonal matrix with a low bandwidth is sparse~[]p.87]\cite{SaadBook}.

Since most real-world networks are scale-free, most of the nodes in the network have low degree.
Consequently, observing that the non-zero structure of the matrix $\BigM$ is identical to that of the adjacency matrix of the network, we can reorder the rows of $\BigM$ such that $\BigM$ can be partitioned into a very large block-diagonal matrix with a low bandwidth (corresponding to small connected subgraphs consisting of low-degree nodes) and relatively dense but much smaller matrices (corresponding to hubs and their connections to other nodes).
%
The following lemma shows how such partitioning of $\BigM$ can be used to separate the solution of the linear system  of Equation~\ref{eq:LinearSystem} into two parts:

\begin{lemma}
Suppose a linear system $\BigM {\mathbf{k}}_q = {\tilde{\mathbf{g}}}_q$, can be partitioned as 
\begin{equation}
\begin{bmatrix}
 \BigMeasy & \BigMright \\ 
 \BigMleft & \BigMhard
\end{bmatrix}
\begin{bmatrix}
 \mathbf{k}_{q1} \\ \mathbf{k}_{q2}
\end{bmatrix}
= 
\begin{bmatrix}
{\tilde{\mathbf{g}}}_{q1} \\ {\tilde{\mathbf{g}}}_{q2}
\end{bmatrix}    , 
\end{equation}
such that $\BigMeasy$ is invertible. Letting $\SchurComplement$
denote the Schur complement, 
the linear system can be solved as:
\begin{align}
    \mathbf{k}_{q2} &= 
    {\mathbf{S}}^{-1} \left( {\tilde{\mathbf{g}}}_{q2} -  \BigMleft {\mathbf{M}}_{11}^{-1} {\tilde{\mathbf{g}}}_{q1} \right) , \\
    \mathbf{k}_{q1} &= 
    {\mathbf{M}}_{11}^{-1} \left( {\tilde{\mathbf{g}}}_{q1} - \BigMright {\mathbf{k}_{q2}} \right) .
\end{align}
%
%
\label{ref:blockLemma}
\end{lemma}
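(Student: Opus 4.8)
The plan is to prove the lemma by the standard block Gaussian elimination argument, reading both identities off the two block rows of the partitioned system. First I would write the matrix equation as its two block equations, namely $\BigMeasy\,\mathbf{k}_{q1} + \BigMright\,\mathbf{k}_{q2} = {\tilde{\mathbf{g}}}_{q1}$ from the top row and $\BigMleft\,\mathbf{k}_{q1} + \BigMhard\,\mathbf{k}_{q2} = {\tilde{\mathbf{g}}}_{q2}$ from the bottom row. Since $\BigMeasy$ is invertible by hypothesis, the top equation solves for $\mathbf{k}_{q1}$ as $\mathbf{k}_{q1} = {\BigMeasy}^{-1}\bigl({\tilde{\mathbf{g}}}_{q1} - \BigMright\,\mathbf{k}_{q2}\bigr)$, which is precisely the second displayed identity in the statement.

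Next I would substitute this expression for $\mathbf{k}_{q1}$ into the bottom block equation and collect the terms involving $\mathbf{k}_{q2}$; this gives $\bigl(\BigMhard - \BigMleft\,{\BigMeasy}^{-1}\BigMright\bigr)\mathbf{k}_{q2} = {\tilde{\mathbf{g}}}_{q2} - \BigMleft\,{\BigMeasy}^{-1}{\tilde{\mathbf{g}}}_{q1}$, i.e.\ $\mathbf{S}\,\mathbf{k}_{q2} = {\tilde{\mathbf{g}}}_{q2} - \BigMleft\,{\BigMeasy}^{-1}{\tilde{\mathbf{g}}}_{q1}$ with $\mathbf{S}$ the Schur complement. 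Multiplying on the left by $\mathbf{S}^{-1}$ yields the first displayed identity. An equivalent and slightly cleaner packaging is to exhibit the block $LDL^{T}$ factorization $\BigM = L\,\operatorname{diag}(\BigMeasy,\mathbf{S})\,L^{T}$, where $L$ is the lower block-triangular matrix with identity diagonal blocks and $(2,1)$-block $\BigMleft\,{\BigMeasy}^{-1}$ (this uses that $\BigM$ is symmetric, the off-diagonal blocks being transposes of one another), and then invert the three factors in turn.

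The only real point that needs care is the invertibility of $\mathbf{S}$, which is implicitly required in order to write $\mathbf{k}_{q2} = \mathbf{S}^{-1}(\cdots)$. I would obtain it from the standing assumption $\alpha < 1/\norm{\boldG}_2$: this makes $\BigM = \Identitymatrix - \alpha\boldG$ symmetric positive definite, hence every principal submatrix, in particular $\BigMeasy$, is positive definite (so the hypothesis ``$\BigMeasy$ invertible'' is in fact automatic in our setting), and the Schur complement of a symmetric positive definite matrix with respect to a principal block is again symmetric positive definite, so $\mathbf{S}$ is invertible. I expect this to be the main obstacle; once it is in place, the elimination algebra above is routine, and one gets as a byproduct that the original system has a unique solution because $\BigM$ is nonsingular.
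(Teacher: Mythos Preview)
Your proposal is correct and is exactly the standard block Gaussian elimination argument the paper has in mind; in fact the paper omits the proof entirely, stating only that it ``is straightforward, and hence it is not included.'' Your additional remark on the invertibility of $\mathbf{S}$ via positive definiteness of $\BigM$ under $\alpha < 1/\norm{\boldG}_2$ is a nice touch that goes beyond what the paper spells out.
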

\begin{proof}
The proof of this lemma is straightforward, and hence it is not included.
\end{proof}

In our application, ${\tilde{\mathbf{g}}}_q=\alpha {\mathbf{g}}_q$.
This lemma applies to the computation of Katz indices as long as $\alpha< 1/ {\norm{\boldG}}_2$,
since $\BigM$ is diagonally dominant and invertible in that case. 
In the light of this lemma, the computation of Katz indices can be performed as follows:
\begin{enumerate}
    \item {\bf Indexing}: Construct $\BigM$. 
    \item {\bf Indexing}: Partition $\network$ using multi-way minimum-vertex-separator partitioning.
    \item {\bf Indexing}: Reorder $\BigM$ so that $\BigM_{11}$ contains the internal edges of all parts resulting from the partitioning with nodes within each partition corresponding to successive rows (hence columns), $\BigMright=\BigMleft$ contains the edges between nodes in partitions and nodes in the separator, and $\BigM_{22}$ contains the edges between nodes in the separator. 
    \item {\bf Indexing}: Compute and store $\mathbf{M}_{11}^{-1}$, $\BigMright$, and $\BigS$.
    \item {\bf Query Processing}: For a query $q$, compute $\mathbf{k}_{q2}$ as given in Equation (5), but without inverting $\BigS$, as described below.
    \item {\bf Query Processing}: Compute $\mathbf{k}_{q1}$ by performing two matrix-vector multiplications as given in Equation (6).
\end{enumerate}

This procedure is identical to the procedure implemented in {\sc I-Chopper}, with one important 
difference in the computation of $\mathbf{k}_{q2}$ during query processing (Step 5).
In {\sc I-Chopper}, this computation is accelerated using Chebyshev polynomials over the elliptic plane~\cite{coskun18}.
In the computation of Katz-index, $\BigS$ is symmetric~[p.271]\cite{SaadBook}, thus the elliptic plane degrades to the real axis and the solution can be found by using Chebyshev polynomials on the real axes~\cite{coskun2016efficient}.
However, this requires knowledge of the largest and smallest eigenvalues of $\BigS$, and it is
costly to compute these eigenvalues.
Since the matrix in question is not symmetric in queries involving random walks, {\sc I-Chopper}
addresses this problem by pre-computing these eigenvalues using Arnoldi's method ~[p. 160]\cite{SaadBook}.
In the computation of Katz-index, however, $\BigS$ is symmetric, and therefore this computation can be avoided by utilizing methods that do not require an eigen-bound.  
Specifically, we use the Conjugate Gradient (CG) method to solve the linear system involving $\BigS$, since CG does not require the knowledge of the largest and smallest eigenvalues of $\BigS$.
Still, this computation can be accelerated using eigenvectors of a low-rank approximation of 
$\BigS^{-1}$. 

Since the details of all other steps are described in \cite{coskun18}, here we briefly describe the key idea in each step. 
We then focus on the description of low-rank approximation and describe this approach in detail.

{\bf Steps 2 and 3 -- Partitioning of $\network$ and Reordering of $\BigM$:} The idea behind the partitioning of the network is to 
reorder the rows and columns of $\BigM$ in such a way that we can obtain a block diagonal 
$\BigMeasy$ with a small bandwidth, i.e., the non-zero entries in matrix $\BigMeasy$ are condensed
around its diagonal, ensuring that ${\mathbf{M}}_{11}^{-1}$ is sparse. 
To accomplish this, we use multi-way minimum-vertex-cover partitioning to partition the nodes of $\network$ into $p$ partitions such that each node in partition $\Pi_i$ are connected only to nodes in $\Pi_i$ or to a set $\Pi_s$ of nodes that are classified as the ``vertex-separator''~\cite{karypis1998fast,karypis1998parallel}.
Given such a partitioning, we reorder the matrix $\BigM$ such that the rows/columns that correspond to nodes in the partitions $\Pi_1,  \Pi_2, ..., \Pi_p$ are ordered next to each other, and
rows/columns that correspond to the nodes in $\Pi_s$ are at the bottom/right of the matrix. 
As a result, the reordered matrix $\BigM$ can be divided into the following sub-matrices: (i) $\BigMeasy$
contains the non-zeros that correspond to the edges within the partitions, 
(ii) $\BigM_{12}$ and ${\mathbf{M}}_{12}^T$ contain the non-zeros that correspond to the edges between nodes in
a partition and nodes in $\Pi_s$, (iii) $\BigM_{22}$ contains the non-zeros that correspond to the edges between nodes $\Pi_s$.  
Since minimum-vertex-separator graph partioning is a NP-hard problem, we use a heuristic that is well-suited to our application.
Namely, the {\sc Part-GraphRecursive} package implemented in the MeTiS graph
partitioning tool [21] allows the user to put a threshold on the size of the vertex separator, 
as opposed to minimizing it,
and recursively bipartitions the network until this threshold is reached.
Therefore, we can directly control the size of $\BigS$ (number of rows/columns of $\BigS$ is equal
to the number of nodes in the vertex separator) and the recursive partitioning generates many small 
partitions with roughly equal sizes, thereby keeping the bandwidth of $\BigMeasy$ small. 



{\bf Step 4 -- Computation of ${\mathbf{M}}_{11}^{-1}$ and $\BigS$.}
Once $\BigMeasy$ is constructed, we invert ${\mathbf{M}}_{11}^{-1}$, which is also relatively sparse
and can be stored as an index. Here, we remark that the inversion of $\BigMeasy$ is feasible even for {\emph{graphs with hundred millions of nodes}} since it is block diagonal with a small bandwidth and there exists many efficient algorithms for inverting banded matrices \cite{coskun18}. 
In our implementation, we use the Incomplete Cholesky factorization along with approximate minimum algorithms~\cite{amd1,amd2} before we invert the sparse block diagonal matrix $\BigMeasy$. 
Once ${\mathbf{M}}_{11}^{-1}$ is available, we compute $\BigS$ as defined in Lemma 1,
and store ${\mathbf{M}}_{11}^{-1}$, $\BigS$, and $\BigMright$.

\begin{figure*}[t]
	\centering
	\scalebox{0.42}{\input{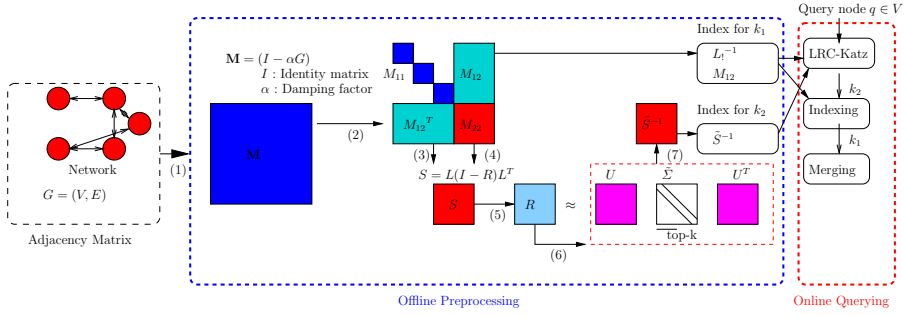}}
	\caption{{\bf Flowchart illustrating the proposed framework for indexed processing
of Katz-based proximity queries on large networks.} 
} 
\label{fig:mainFigure}
\end{figure*}

{\bf Step 5 -- Computation of ${\mathbf{k}}_{q2}$ During Query Processing.}

Recall that processing of a Katz index query involves the 
computation of ${\mathbf{k}}_{q}$ for a given query node $q$.
As described in Lemma 1, we divide the computation of ${\mathbf{k}}_{q}$ into
the computation of ${\mathbf{k}}_{q1}$ and the computation of ${\mathbf{k}}_{q2}$.
Since the computation of ${\mathbf{k}}_{q1}$ required knowledge of ${\mathbf{k}}_{q2}$,
we first compute ${\mathbf{k}}_{q2}$ during query processing. 
This computation requires solution of the system
\begin{equation}
\BigS {\mathbf{k}}_{q2} = ( {\mathbf{g}}_2 - {\mathbf{M}}_{12}^T {\mathbf{M}}_{11}^{-1} {\mathbf{g}}_1) = {\mathbf{f}} ,
\label{eq:SKatz}
\end{equation}
where  ${\mathbf{f}}$ can be computed efficiently (by performing a single matrix-vector multiplication) during query processing, since we form and index ${\mathbf{M}}_{12}^T$ and
${\mathbf{M}}_{11}^{-1}$ in Steps 3 and 4. 
However, solving the linear system $\BigS {\mathbf{k}}_{q2} = {\mathbf{f}}$,
during query processing or pre-computing and storing the inverse of $\BigS$ is not 
feasible since $\BigS$ is a relatively dense matrix.
For this reason, we compute a low-rank approximation for $\BigS$ offline and 
store this approximation as an index that can be used to efficiently compute ${\mathbf{k}}_{q2}$ 
during query processing. 
We now explain this process. 
To avoid cluttered notation, we drop the subscripts $(q)$ in the following sections.

\subsubsection{Low Rank Correction}
The idea behind Low Rank Correct Katz Algorithm (\algo) is as follows:  
To solve $\BigS {\mathbf{k}}_2 = {\mathbf{f}}$, we approximate the Schur complement
$\BigS \in \mathbb{R}^{n_2 \times n_2}$ via $\BigMhard$ plus some low rank vectors so that we use sparser matrices instead of dense matrix $\BigS$. 

Let $\BigMhard = {\mathbf{L}} {\mathbf{L}}^T$ be the Cholesky factorization of $\BigMhard$ and recall that the Schur complement matrix can be rewritten as 
\begin{align}
{\mathbf{S}} &= {\mathbf{L}} {\mathbf{L}}^T - {\mathbf{M}}_{12}^T {\mathbf{M}}^{-1}_{11} {\mathbf{M}}_{12} 
\\
& = {\mathbf{L}} ( {\mathbf{I}} - {\mathbf{L}}^{-1} {\mathbf{M}}_{12}^T {\mathbf{M}}^{-1}_{11} {\mathbf{M}}_{12} {\mathbf{L}}^{-T} ) {\mathbf{L}}^{T} 
\\
& = {\mathbf{L}} ( {\mathbf{I}} - {\mathbf{R}} ) {\mathbf{L}}^{T}
\end{align}
%
Now define the eigen-decomposition of the symmetric matrix ${\mathbf{R}}$ as follows:
\begin{equation}
{\mathbf{R}} = {\mathbf{L}}^{-1} {\mathbf{M}}_{12}^T {\mathbf{M}}^{-1}_{11} {\mathbf{M}}_{12} {\mathbf{L}}^{-T}
= {\mathbf{U}} {\boldsymbol{\Sigma}} {\mathbf{U}}^T ,
\label{eq:RMatrix} 
\end{equation}
where the diagonal entries of ${\boldsymbol{\Sigma}}$ are the eigenvalues of ${\mathbf{R}}$ and ${\mathbf{U}}$ is the column matrix that contains the corresponding eigenvectors, which are orthogonal to each other. Then $\BigS$ can be rewritten as: 
\begin{equation}
 \BigS = {\mathbf{L}} ( {\mathbf{I}} - {\mathbf{R}} ) {\mathbf{L}}^{T} 
       = {\mathbf{L}} ( {\mathbf{I}} - {\mathbf{U}} {\boldsymbol{\Sigma}} {\mathbf{U}}^T ) {\mathbf{L}}^T 
       = {\mathbf{L}} {\mathbf{U}} ( {\mathbf{I}} - {\boldsymbol{\Sigma}} ) {( {\mathbf{L}} {\mathbf{U}} )}^T .
\end{equation}
%
%

Thus,  the inverse of the Schur complement matrix $\BigS$ becomes:
\begin{equation}
\begin{aligned}
\BigS^{-1} & =
{ \left( {\mathbf{L}} {\mathbf{U}} ( {\mathbf{I}} - {\boldsymbol{\Sigma}} ) {( {\mathbf{L}} {\mathbf{U}} )}^T \right) }^{-1}
\\
& = {\mathbf{L}}^{-T} {\mathbf{U}} { ( {\mathbf{I}} - {\boldsymbol{\Sigma}} ) }^{-1} {\mathbf{U}}^T {\mathbf{L}}^{-1} 
\\
& = {\mathbf{L}}^{-T} \left[ {\mathbf{I}} + {\mathbf{U}} {( {\mathbf{I}} - {\boldsymbol{\Sigma}} )}^{-1} {\mathbf{U}}^T - {\mathbf{I}} \right] {\mathbf{L}}^{-1} \\
& = {\mathbf{M}}_{22}^{-1} + {\mathbf{L}}^{-T} {\mathbf{U}} \left[ {( {\mathbf{I}} - {\boldsymbol{\Sigma}} )}^{-1} - {\mathbf{I}} \right] {\mathbf{U}}^T {\mathbf{L}}^{-1}.
\end{aligned}
\label{eq:inverseS}
\end{equation}
%

Now consider approximating ${\mathbf{R}}$ using its most dominant $\ell$ eigenvectors.
That is, define $\Tilde{R} \approx \tilde{\mathbf{U}} \Tilde{\Sigma} {\tilde{\mathbf{U}}}^T$, 
where $\tilde{\mathbf{U}}$ and $\Tilde{\Sigma} \in \mathbb{R}^{n_2 \times n_2}$, $diag (\Tilde{\Sigma}) =  (\sigma_1, \sigma_2, .., \sigma_{\ell}, 0, 0, ..., 0)$ and $\tilde{\mathbf{U}}$ consists of first $\ell$ eigenvectors of ${\mathbf{U}}$ padded with zeros, i.e.,:

\begin{align*}
{\mathbf{R}} & \approx \tilde{\mathbf{U}} \Tilde{\Sigma} {\tilde{\mathbf{U}}}^T \\
&=
\begin{tikzpicture}[
baseline,
mymat/.style={
  matrix of math nodes,
  ampersand replacement=\&,
  left delimiter=(,
  right delimiter=),
  nodes in empty cells,
  nodes={outer sep=-\pgflinewidth,text depth=0.5ex,text height=1.2ex,text width=0.6em}
}
]
\begin{scope}[every right delimiter/.style={xshift=-3ex}]
\matrix[mymat] (matu)
{
 \& \& \& \& \& \\
\& \& \& \& \& \\
\& \& \& \& \& \\
\& \& \& \& \& \\
\& \& \& \& \& \\
\& \& \& \& \& \\
};
\node 
  at ([shift={(3pt,-7pt)}]matu-3-2.west) 
  {$\cdots$};
\node 
  at ([shift={(3pt,-7pt)}]matu-3-5.west) 
  {$\cdots$};
\foreach \Columna/\Valor in {1/1,3/l,4/{l+1},6/n_2}
{
\draw 
  (matu-1-\Columna.north west)
    rectangle
  ([xshift=4pt]matu-6-\Columna.south west);
\node[above] 
  at ([xshift=2pt]matu-1-\Columna.north west) 
  {$u_{\Valor}$};
}
\draw[decorate,decoration={brace,mirror,raise=3pt}] 
  (matu-6-1.south west) -- 
   node[below=4pt] {$\Mcol$}
  ([xshift=4pt]matu-6-3.south west);
\draw[decorate,decoration={brace,mirror,raise=3pt}] 
  (matu-6-4.south west) -- 
   node[below=4pt] {$\Mnull$}
  ([xshift=4pt]matu-6-6.south west);
\end{scope}
\matrix[mymat,right=10pt of matu] (matsigma)
{
\sigma_{1} \& \& \& \& \& \\
\& \ddots \& \& \& \& \\
\& \& \sigma_{l} \& \& \& \\
\& \& \& 0 \& \& \\
\& \& \& \& \ddots \& \\
\& \& \& \& \& 0 \\
};
\matrix[mymat,right=25pt of matsigma] (matv)
{
 \& \& \& \& \& \\
\& \& \& \& \& \\
\& \& \& \& \& \\
\& \& \& \& \& \\
\& \& \& \& \& \\
\& \& \& \& \& \\
};
\foreach \Fila/\Valor in {1/1,3/l,4/{l+1},6/n_2}
{
\draw 
  ([yshift=-6pt]matv-\Fila-1.north west)
    rectangle
  ([yshift=-10pt]matv-\Fila-6.north east);
\node[right=12pt] 
  at ([yshift=-8pt]matv-\Fila-6.north east) 
  {$u^{T}_{\Valor}$};
}
\draw[decorate,decoration={brace,raise=37pt}] 
  ([yshift=-6pt]matv-1-6.north east) -- 
   node[right=38pt] {$\Mrow$}
  ([yshift=-10pt]matv-3-6.north east);
\draw[decorate,decoration={brace,raise=37pt}] 
  ([yshift=-6pt]matv-4-6.north east) -- 
   node[right=38pt] {$\Mnull$}
  ([yshift=-10pt]matv-6-6.north east);
\end{tikzpicture}
\end{align*}
Using this approximation to $\mathbf{R}$, we define an approximation  to $\BigS^{-1}$ as follows:
\begin{equation}
\Tilde{\BigS}^{-1}= \BigMhard^{-1} + \mathbf{L}^{-T} \tilde{\mathbf{U}} [{(\mathbf{I} - \tilde{\mathbf{\Sigma}})}^{-1} -\mathbf{I}] \tilde{\mathbf{U}}^T \mathbf{L}^{-1}
\label{eq:inverseTildeS}
\end{equation}

Note that we never compute $\Tilde{\BigS}^{-1}$ in practice, we define it here solely
for theoretical justification.


The following theorem establishes the relationship between the eigenvalues
of $\BigS \Tilde{\BigS}^{-1}$ and the eigenvalues of ${\mathbf{R}}$.

\begin{theorem}
Assume that the eigenvalues of $\mathbf{R}$ are ordered as $\sigma_1 \geq \sigma_2 \geq .... \geq \sigma_{n_2}$, where $n_2$ is size of $\BigS$. For a given integer $\ell$, define $\Tilde{\BigS}^{-1}$
as in Equation~(\ref{eq:inverseTildeS}).
Then, the eigenvalues of $\BigS \Tilde{\BigS}^{-1}$ are in the form of
\[
  \lambda_i =
  \begin{cases}
                                   1 & \text{if $i\leq \ell$} .\\
                                     1 - \sigma_i & \text{otherwise} 
  \end{cases}
\]
\label{ref:MainTheorem}
\end{theorem}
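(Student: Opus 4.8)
The plan is to compute the product $\mathbf{S}\,\tilde{\mathbf{S}}^{-1}$ in closed form and to recognize it as a similarity transform of an explicit diagonal matrix, so that the eigenvalues can simply be read off. The only facts needed are the factorization $\mathbf{S} = \mathbf{L}\mathbf{U}(\mathbf{I}-\boldsymbol{\Sigma})\mathbf{U}^T\mathbf{L}^T$ derived just before Equation~\eqref{eq:inverseS}, the definition \eqref{eq:inverseTildeS} of $\tilde{\mathbf{S}}^{-1}$, the orthogonality $\mathbf{U}\mathbf{U}^T=\mathbf{U}^T\mathbf{U}=\mathbf{I}$, and the Cholesky identity $\mathbf{M}_{22}=\mathbf{L}\mathbf{L}^T$.

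First I would rewrite $\tilde{\mathbf{S}}^{-1}$ in a single factored form. Since $\mathbf{M}_{22}^{-1}=\mathbf{L}^{-T}\mathbf{L}^{-1}$, Equation~\eqref{eq:inverseTildeS} gives
\[
\tilde{\mathbf{S}}^{-1} = \mathbf{L}^{-T}\Bigl(\mathbf{I} + \tilde{\mathbf{U}}\bigl[(\mathbf{I}-\tilde{\mathbf{\Sigma}})^{-1}-\mathbf{I}\bigr]\tilde{\mathbf{U}}^T\Bigr)\mathbf{L}^{-1}.
\]
Now write $\tilde{\mathbf{U}}=\mathbf{U}\mathbf{P}$, where $\mathbf{P}=\mathrm{diag}(1,\dots,1,0,\dots,0)$ is the orthogonal projection onto the first $\ell$ coordinates; then $\tilde{\mathbf{\Sigma}}=\mathbf{P}\boldsymbol{\Sigma}\mathbf{P}$, and the diagonal matrix $\mathbf{D}:=(\mathbf{I}-\tilde{\mathbf{\Sigma}})^{-1}-\mathbf{I}$ has entries $\sigma_i/(1-\sigma_i)$ for $i\le\ell$ and $0$ for $i>\ell$. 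In particular $\mathbf{P}\mathbf{D}\mathbf{P}=\mathbf{D}$, so $\tilde{\mathbf{U}}\mathbf{D}\tilde{\mathbf{U}}^T=\mathbf{U}\mathbf{P}\mathbf{D}\mathbf{P}\mathbf{U}^T=\mathbf{U}\mathbf{D}\mathbf{U}^T$; using $\mathbf{I}=\mathbf{U}\mathbf{I}\mathbf{U}^T$ to absorb the identity term, this collapses to
\[
\tilde{\mathbf{S}}^{-1} = \mathbf{L}^{-T}\mathbf{U}(\mathbf{I}+\mathbf{D})\mathbf{U}^T\mathbf{L}^{-1},
\qquad \mathbf{I}+\mathbf{D}=\mathrm{diag}\Bigl(\tfrac{1}{1-\sigma_1},\dots,\tfrac{1}{1-\sigma_\ell},1,\dots,1\Bigr).
\]
All these inverses are well defined: $\mathbf{S}$ is the Schur complement of the positive definite matrix $\mathbf{M}$, hence itself positive definite, so $\mathbf{I}-\boldsymbol{\Sigma}$ is positive definite and $\sigma_i<1$ for every $i$.

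Then I multiply, using $\mathbf{S}=\mathbf{L}\mathbf{U}(\mathbf{I}-\boldsymbol{\Sigma})\mathbf{U}^T\mathbf{L}^T$. The factors $\mathbf{L}^T\mathbf{L}^{-T}$ and $\mathbf{U}^T\mathbf{U}$ cancel, and since $(\mathbf{L}\mathbf{U})^{-1}=\mathbf{U}^T\mathbf{L}^{-1}$ we obtain
\[
\mathbf{S}\,\tilde{\mathbf{S}}^{-1} = (\mathbf{L}\mathbf{U})\,(\mathbf{I}-\boldsymbol{\Sigma})(\mathbf{I}+\mathbf{D})\,(\mathbf{L}\mathbf{U})^{-1}.
\]
Thus $\mathbf{S}\,\tilde{\mathbf{S}}^{-1}$ is similar to the diagonal matrix $(\mathbf{I}-\boldsymbol{\Sigma})(\mathbf{I}+\mathbf{D})$, whose $i$-th diagonal entry is $(1-\sigma_i)\cdot\tfrac{1}{1-\sigma_i}=1$ for $i\le\ell$ and $(1-\sigma_i)\cdot 1 = 1-\sigma_i$ for $i>\ell$. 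These are precisely the claimed eigenvalues, which proves the theorem.

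There is no serious obstacle here; the proof is essentially an algebraic manipulation. The one point that requires care is the zero-padding convention for $\tilde{\mathbf{U}}$ and $\tilde{\mathbf{\Sigma}}$: one must check that $(\mathbf{I}-\tilde{\mathbf{\Sigma}})^{-1}$ is well defined on the padded block (it is, since the padded diagonal entries are $0$, not $1$), that the projection $\mathbf{P}$ can be commuted through $\mathbf{D}$ as above, and that positive definiteness of $\mathbf{S}$ rules out $\sigma_i=1$ so that nothing in the similarity transform degenerates.
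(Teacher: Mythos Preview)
Your proof is correct and follows essentially the same route as the paper: both show that $\mathbf{S}\tilde{\mathbf{S}}^{-1}$ is similar, via conjugation by $\mathbf{L}\mathbf{U}$, to the diagonal matrix $(\mathbf{I}-\boldsymbol{\Sigma})(\mathbf{I}-\tilde{\boldsymbol{\Sigma}})^{-1}$, from which the eigenvalues are read off directly. Your computation is slightly more streamlined (you multiply $\mathbf{S}$ and $\tilde{\mathbf{S}}^{-1}$ directly rather than first forming $\mathbf{S}^{-1}-\tilde{\mathbf{S}}^{-1}$ as the paper does), and your treatment of the zero-padded $\tilde{\mathbf{U}}$ via the projection $\mathbf{P}$ is more careful than the paper's, which ends with $(\mathbf{L}\tilde{\mathbf{U}})^{-1}$ even though $\tilde{\mathbf{U}}$ as defined is singular.
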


\begin{proof}
From Equations~\eqref{eq:inverseS} and~\eqref{eq:inverseTildeS}, we can write $\BigS^{-1} - \Tilde{\BigS}^{-1} = \mathbf{L}^{-T} \tilde{\mathbf{U}} [{(\mathbf{I} - \mathbf{\Sigma})}^{-1} - {(\mathbf{I} - \mathbf{\Tilde{\Sigma}})}^{-1}] \tilde{\mathbf{U}}^T \mathbf{L}^{-1} $. Then, we have, $$\mathbf{L}^T \BigS^{-1} \mathbf{L} - \mathbf{L}^T \Tilde{\BigS}^{-1} \mathbf{L} =  \tilde{\mathbf{U}} [{(\mathbf{I} - \mathbf{\Sigma})}^{-1} - {(\mathbf{I} - \tilde{\mathbf{\Sigma}})}^{-1}] \mathbf{U}^T.$$ 
Multiplying both sides of the above equality with ${(\mathbf{L}^T \BigS^{-1} \mathbf{L})\tilde{\mathbf{U}}}^{-1} = \mathbf{L}^{-1} \BigS \mathbf{L}^{-T}$, we have 
$$ \mathbf{I} - \mathbf{L}^{-1} \BigS \Tilde{\BigS}^{-1} \mathbf{L} = \mathbf{L}^{-1} \BigS \mathbf{L}^{-T} \tilde{\mathbf{U}} [{(\mathbf{I} - \mathbf{\Sigma})}^{-1} - {(\mathbf{I} - \tilde{\mathbf{\Sigma}})}^{-1}] \tilde{\mathbf{U}}^T.$$
From the definition, we know that $\mathbf{L}^{-1} \BigS \mathbf{L}^{-T} = (\mathbf{I} - \mathbf{R}) = \tilde{\mathbf{U}} (\mathbf{I} - \mathbf{\Sigma}) \tilde{\mathbf{U}}^T$. Then, by using orthogonality of $\mathbf{U}$, we have

$\begin{aligned}
 &\mathbf{I} - \mathbf{L}^{-1} \BigS \Tilde{\BigS}^{-1} \mathbf{L} = \\
 &\tilde{\mathbf{U}} (\mathbf{I} -\mathbf{\Sigma })\tilde{\mathbf{U}}^T \tilde{\mathbf{U}} [{(\mathbf{I} - \mathbf{\Sigma})}^{-1} - {(\mathbf{I} - \tilde{\mathbf{\Sigma}})}^{-1}] \tilde{\mathbf{U}}^T \\
 & \tilde{\mathbf{U}}[\mathbf{I} - (\mathbf{I} - \mathbf{\Sigma}){(\mathbf{I} - \tilde{\mathbf{\Sigma}})}^{-1}]\tilde{\mathbf{U}}^T.
\end{aligned} $

Finally, we have,
$$\BigS \tilde{\BigS}^{-1} = (\mathbf{L}\tilde{\mathbf{U}})[(\mathbf{I} - \mathbf{\Sigma}){(\mathbf{I} - \tilde{\mathbf{\Sigma}})}^{-1}] {(\mathbf{L}\tilde{\mathbf{U}})}^{-1} .$$
Q.E.D.
\end{proof}

It follows from this theorem that if we compute the top $\ell$ eigenvectors of 
$\mathbf{R}$ matrix, we can use these eigenvectors and $\BigMhard$ to efficiently 
solve the system in Equation~\eqref{eq:SKatz}. 
This is because, 
in the iterative solution of~\eqref{eq:SKatz}, we multiply $\mathbf{k_2}$ vector 
by a matrix that contains $\ell \times \ell$ identity matrix on top instead of $\BigS$ matrix at each iteration. From the theorem, we can approximate the first $\ell$ part of inverse of $\BigS$ via low rank and since this $\ell \times \ell$ upper part of inverse of $\BigS$ is already computed in the preproccessing phase, we automatically use precomputed part in the iterative computation of ~\eqref{eq:SKatz}.Setting the iterative process this way, we eliminate  $\ell \times \ell$ computation from equation ~\eqref{eq:SKatz}.
\begin{algorithm}
\caption{The Preprocessing Phase}\label{alg:preprocess}
\begin{algorithmic}[1]
\Procedure{Preprocess}{$\boldG, \alpha,k$}
\State Construct $\BigM \leftarrow (\Identitymatrix - \alpha \boldG)$
\State Use minimum-vertex-seperator graph partitioning on $\BigM$ to partition it into $\BigMeasy, \BigMleft, \BigMright, \BigMhard$ \cite{karypis1998fast,karypis1998parallel}
\State Decompose $\BigMeasy$ into $\mathbf{L_{1}}$ and ${\mathbf{L}_{1}}^T$ using Cholesky factorization and invert $\mathbf{L}_{1}$ and ${\mathbf{L}_{1}}^T$
\State Decompose $\BigMhard$ into $\mathbf{L}$ and ${\mathbf{L}^T}$ using Cholesky factorization
\State Create $\residualArnoldi^{(0)}$ as normalized random vector
\State Compute $\left[\tilde{\mathbf{U}}, \tilde{\mathbf{\Sigma}}\right]\ = Lanczos(\mathbf{R} \residualArnoldi^{(0)}, k,\residualArnoldi^{(0)} )$ \cite{demmel1997applied} \Comment{As matrix-vector product}
\EndProcedure
\end{algorithmic}
\end{algorithm}
\subsubsection{The \algo\ Algorithm}
In this section, we outline our algorithm for Katz-based network proximity computation. In ``offline" preprocessing Algorithm~\ref{alg:preprocess}, we first construct $\BigM$ and use \textit{PartGraphRecursive} in {\sc Metis} to partition
$\BigM$ in such a way that $\BigMeasy$ is a sparse block diagonal matrix, and $\BigMhard$ is
dense but smaller~\cite{karypis1998fast,karypis1998parallel}. Next, we reorder the entries
of partitioned matrix $\BigM$ based on an approximate minimum degree ordering
(AMD)~\cite{amd1,amd2}. After reordering entries of $\BigM$, we invert the  Cholesky factorization of sparse block-diagonal matrix $\BigMeasy$ and obtain ${\mathbf{L}_{1}}^{-1}$
and ${\mathbf{L}_{1}}^{-T}$. Then, we construct $\tilde{\mathbf{U}}$ and $\tilde{\mathbf{\Sigma}}$ via \textit{Lanczos} procedure \cite{demmel1997applied} without forming $\mathbf{R}$ matrix. Subsequently, we form matrices for $\Tilde{\BigS}$ and store the resulting values and matrices
into an index to use them in query processing phase of our algorithm, \algo\ .
\begin{algorithm}
\caption{The \algo\ Algorithm}\label{alg:lrcKatz}
\begin{algorithmic}[1]
\Procedure{LRC-Katz}{}
\State Partition vector $e_q$ into $e_1$ and $e_2$ for query, $q$
\State Construct $b_1$, $b2$, and $f = (g_2 - \BigMleft \BigMeasy^{-1} g_1)$
\State Create ${k_2}^{(0)}$ as normalized random vector
\State Set $i=0$, $r^{(i)}=f-\BigS {k_2}^{(k)}$, $s=\BigS r^{(i)}$, 
     $p=\Tilde{\BigS}\backslash s^{(i)}$, $y^{(i)}=\Tilde{\BigS}\backslash r^{(i)}$ \Comment{ $\BigS $ and $\Tilde{\BigS} $ are used as matrix-vector product}
\State  $\gamma^{(i)}=y^{(i)\mathrm{T}}s^{(i)}$
\If {$\gamma^{(k)}\le\epsilon$} \label{lst:line:gotoline}
\State  $k_2={k_2}^{(i)}$ and terminate
\EndIf
 \State  $q^{(i)}=\BigS p^{(i)}$
  \State  $\alpha^{(i)}=\frac{\gamma^{(i)}}{\|q^{(i)}\|^2}$
 \State  ${k_2}^{(i+1)}={k_2}^{(i)}+\alpha^{(i)}p^{(i)}$
  \State  $r^{(i+1)}=r^{(i)}-\alpha^{(i)}q^{(i)}$
 \State  $s^{(i+1)}=\BigS ^{(i+1)}$
\State  $y^{(i+1)}=\Tilde{\BigS} \backslash r^{(i+1)}$
\State  $\gamma^{(i+1)}=y^{(i+1)\mathrm{T}}s^{(i+1)}$
 \State  $p^{(i+1)}=\Tilde{\BigS} \backslash s^{(i+1)}+\frac{\gamma^{(i+1)}}{\gamma^{(i)}}p^{(i)}$
 
  \If{$i<i_{\text{max}}$}
  \State  $i\gets i + 1$ and go to line~\ref{lst:line:gotoline}
  \EndIf
  \State Compute $ k_1 \leftarrow {L_1}^{-1} ({L_{1}}^{-T}(g_1 - \BigMright k_2))$ and merge $k_1$ and $k_2$ as Katz-vector
\EndProcedure
\end{algorithmic}
\end{algorithm}

In the query phase of Katz-based proximity, for a given query node, $q$. We first construct the identity vector
$e_{q}$ and reorder the entries of $e_{q}$ using the same
ordering of $\BigM$. Subsequently, we divide $e_{q}$ into two parts,
$e_{q} = \begin{bmatrix}
 {e_q}_1 \\ {e_q}_2
\end{bmatrix}
$, based on the partition of $\BigM$ and set $b_1 =  {e_q}_1 - {(\Identitymatrix - \alpha \network)} {e_q}_1 $ and $b_2 = {e_q}_2 - {(\Identitymatrix - \alpha \network)} {e_q}_2 $. Next, we use the indexed matrices and $\Tilde{\BigS}$ to compute $k_2$ in equation ~\eqref{eq:SKatz}, the lower part of solution of
the linear system, with Conjugate Gradient method. Here, the $\Tilde{\BigS}$ serves as preconfitioner of Conjugate Gradient to refine norms of eigenvectors of $\BigS$. Finally, using $k_2$
and the indexed matrices, we compute $k_1$, the upper part of solution of the linear
system. We then merge the entries in $k_1$ and $k_2$ and return the resulting merged vectoras Katz-based network proximity vector.

\subsection{Efficient Processing of Link Prediction Queries via Katz Proximity} 

Link prediction can be defined as the problem of predicting the links that are likely to emerge/dissappear in the future, given the
current state of the network ~\cite{LinkLiben}. Various topological measures were broadly examined by Liben et al.~\cite{LinkLiben}, as unsupervised link prediction features. One can classify these measures into two categories: Neighborhood-based measures (local) and path-based measures (global). Clearly, The former are cheaper to compute, however, such local measures treat the network as a ``bag of interactions” rather than a true network, since they do not take into account the potential flow of information across the network through indirect paths, yet the latter (global) are more effective at link prediction since they account for the flow of the information through the indirect paths ~\cite{FastKatz, coskunLink}.

\begin{figure}[t]
	\centering
	\scalebox{0.4}{\input{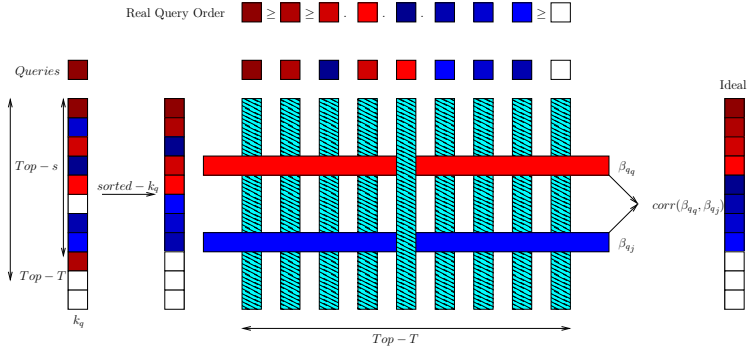}}
	\caption{{\bf Flowchart illustrating the proposed framework for \algoLink.} In the uppermost part of the Figure shows the assumed color strength, the darkest red is the strongest. From left to right, we start a query node $q$ and compute its Katz-based scores as $k_q$ and sort them. Assume that $s =8$ and $T= 9$, For $T$ nodes, we compute Katz-based scores of them and store as cyan matrix. Then, for row $q$ and $j$, we have ${\beta_q}_q$ and ${\beta_q}_j$ $9\times 1$ vectors. To measure the global closeness of $j$ to $q$, we take correlation of ${\beta_q}_q$ and ${\beta_q}_j$ and approximate to the rightmost vector.}
\label{fig:LinkPredictionFigure}
\end{figure}
In the context of disease gene prioritization, as an application of link prediction problem, global measures are also shown to be significantly more effective than local measures \cite{navlakhaGlobal}. However, these global measures are favor high-degree nodes over nodes with lower degree ~\cite{ertenDada}. To alleviate this problem, Erten et al. ~\cite{ertenGlobal} proposed Pearson Correlation based global method that assess the closeness of nodes in the network by comparing the views of the network from the perspective of nodes. More precisely, the authors' approach was to compute all proximity vectors and store them as a matrix for all nodes, as \textit{Katz matrix} $\Katzmatrix$ in equation ~\eqref{eq:KatzMatrix}, and take the row-wise Pearson Correlation as a new global proximity measure ~\cite{ertenGlobal}. Later, it has been shown that Pearson Correlation based approach's performance ~\cite{ertenGlobal} for the link prediction problem can adversely affected by high dimensionality, sparsity of the proximity vector ~\cite{coskunLink}. To alleviate this high dimesionality problem, Coskun et al. ~\cite{coskunLink} proposed two effective dimensionality reduction techniques. However, these techniques requires to compute all proximity matrix and store it in the memory. Clearly, we neither want to compute the \textit{Katz matrix}, $\Katzmatrix$ nor hold it in the memory.

Here in this paper, we propose an alternative algorithm, called \algoLink\, for Katz-based proximity measure that takes sparsity into account without forming the \textit{Katz matrix}. The idea behind \algoLink\ is as follows: For a given query node $q$ and an integer $s$, we compute Katz-based proximity, $k_q$ via \algo\ and sort the scores of $k_q$. Afterward, we take Top-s nodes corresponding to the highest $s\ $ scores in $k_q$. Then, we compute Katz-based proximity of top-T nodes corresponding top-T the highest scores in $k_q$. This way, we aim at treading those top-T nodes as modular nodes that see the query node as an important nodes in whole network. This approach enables us to approximate the dimensionality reduction approaches introduced by ~\cite{coskunLink} without forming $\Katzmatrix$ matrix. The workflow of the proposed framework is shown in Figure ~\ref{fig:LinkPredictionFigure}.

\begin{algorithm}
\caption{\algoLink }\label{alg:sparseKatz}
\begin{algorithmic}[1]
\Procedure{Sparse-Katz}{}
\State Given $Query \ q, positive \ integer \ s \ and \ T$
\State Compute $k_q$ with \algo\ 
\State Sort $k_q$ in descending and take $Top-s\ $ nodes
\For{\texttt{$t = 1: T$}}
        \State \texttt{Compute $k_t$ with \algo\ }
        \State \texttt{Store $k_t$s as matrix}
      \EndFor
\State Take row-wise Pearson Correlation of $K_T$ matrix and form $Top-T\ $ list 
\State Change $Top-s \ $ list according the order of $Top-T$ list
\EndProcedure
\end{algorithmic}
\end{algorithm}

\begin{figure*}[t]
  \begin{center}
    \subfigure[]{\label{fig:Iter-a}
    \includegraphics[scale=0.28]{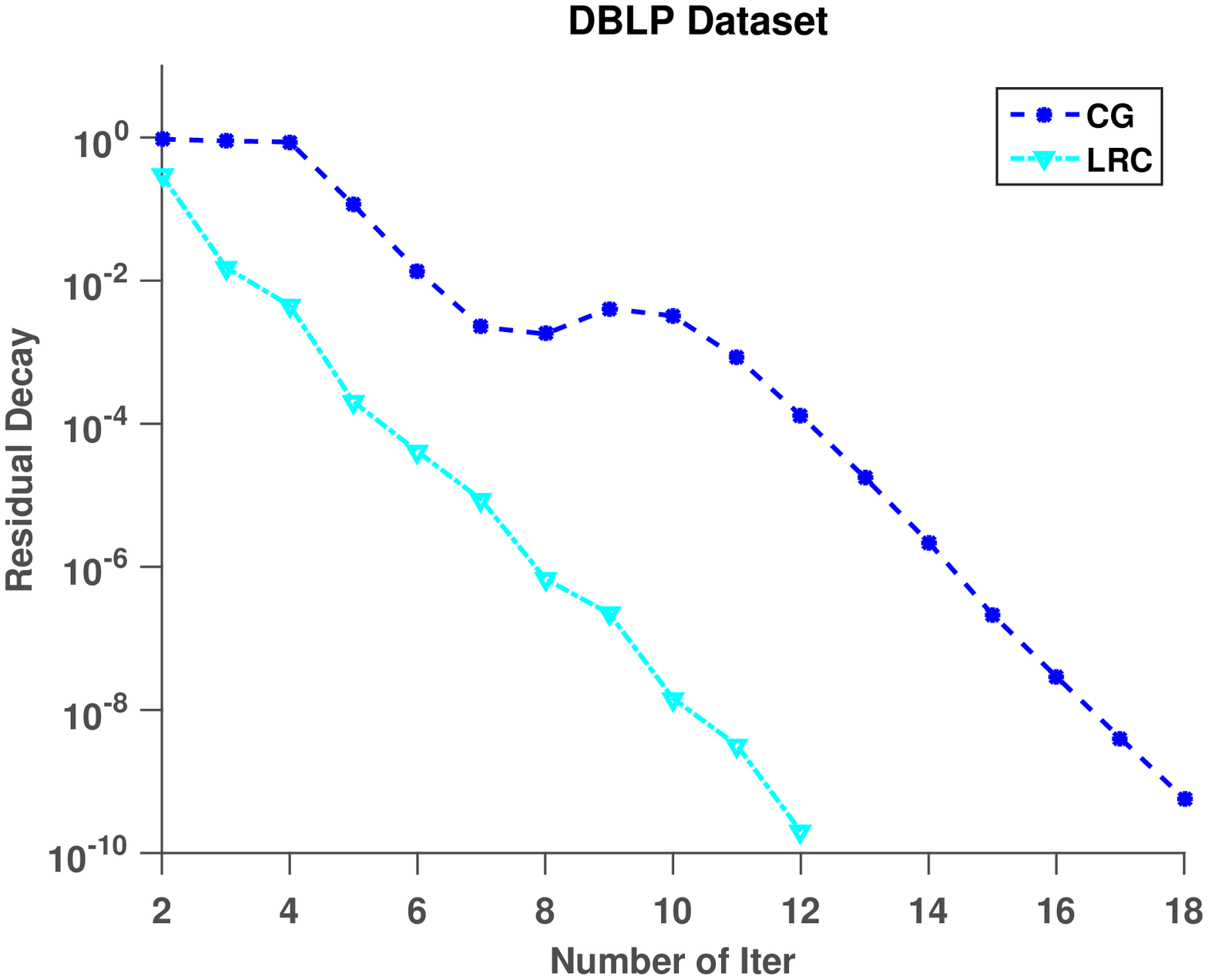}}\quad
    \subfigure[]{\label{fig:Iter-b}
    \includegraphics[scale=0.28]{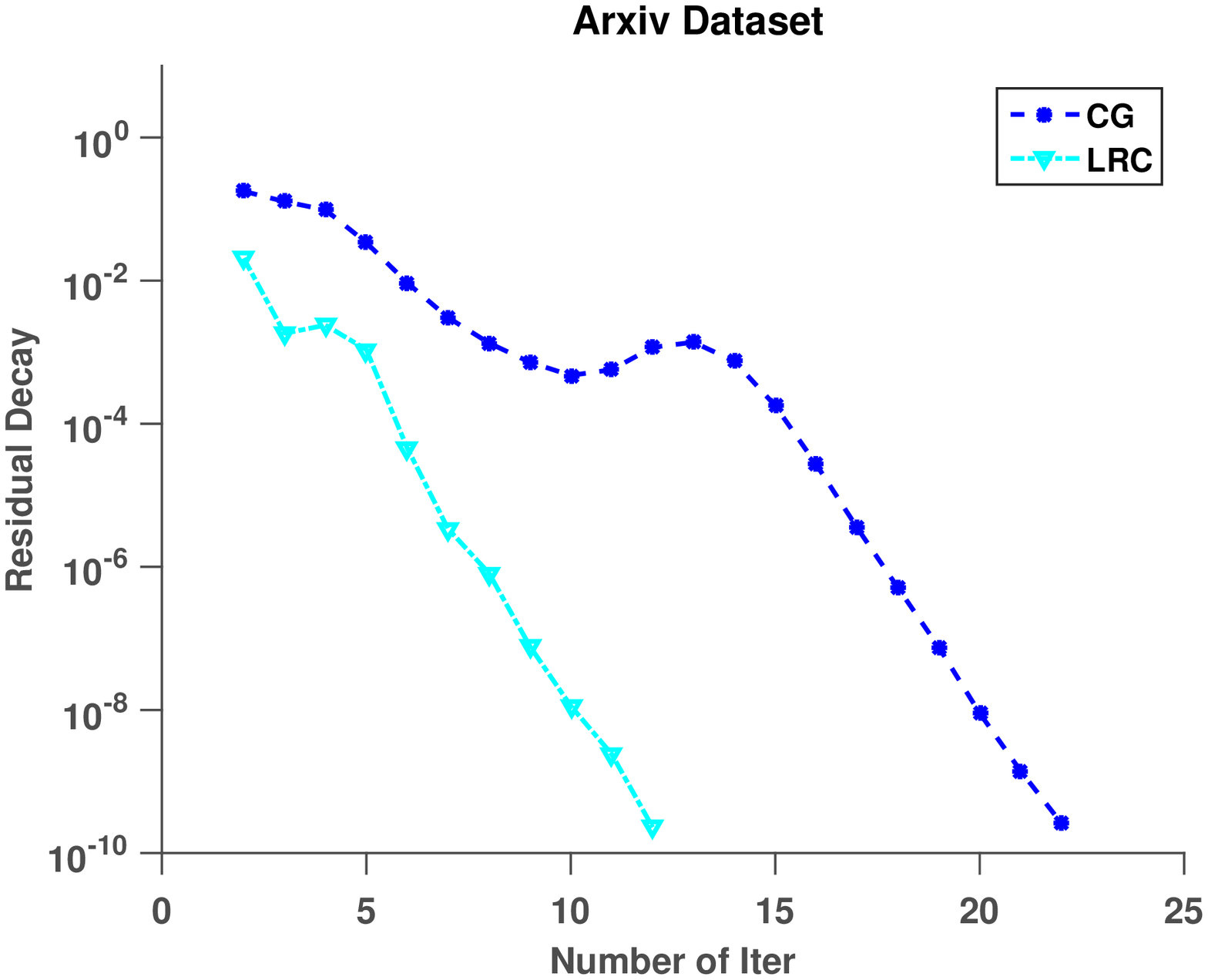}}\quad
     \subfigure[]{\label{fig:Iter-c}
     \includegraphics[scale=0.28]{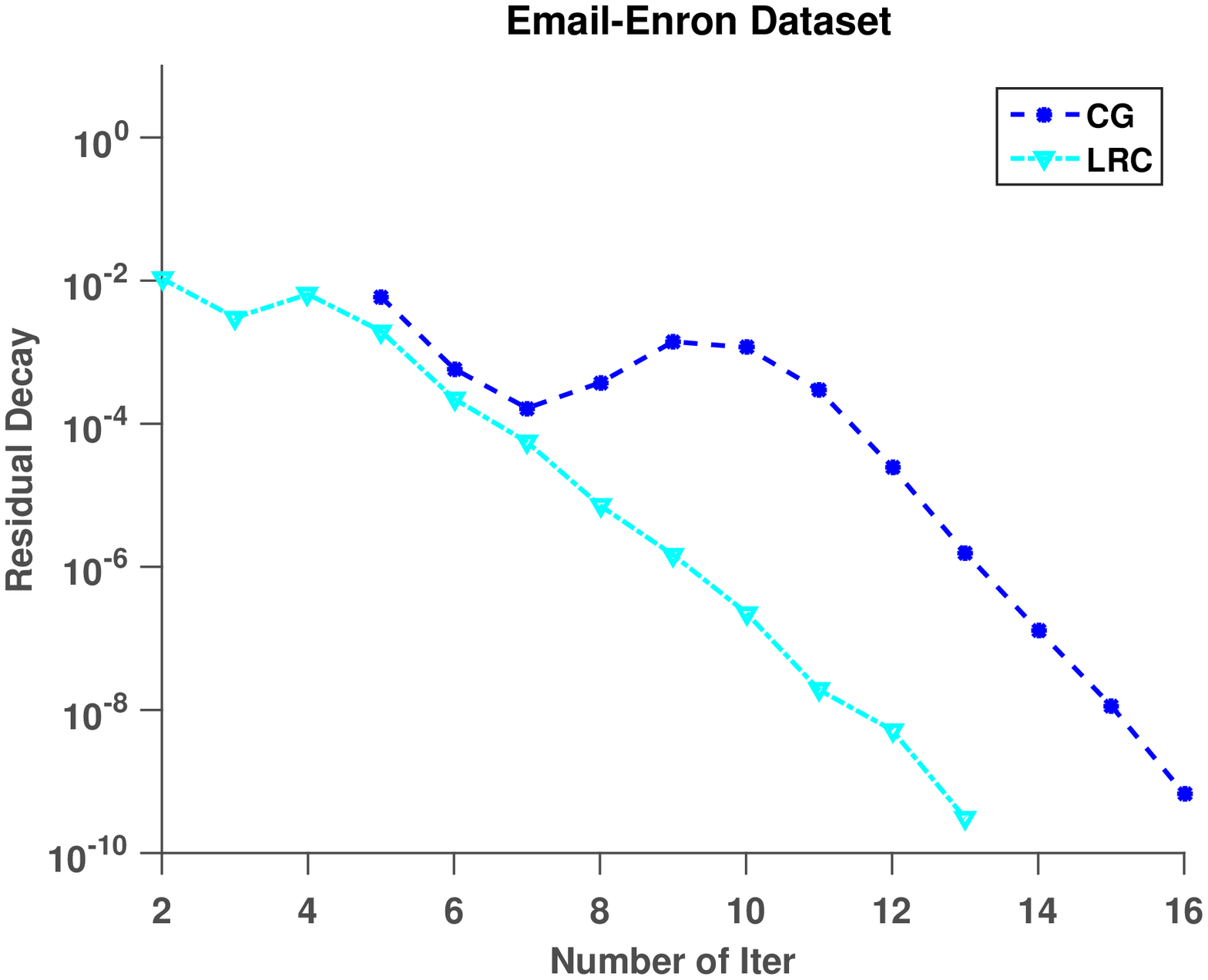}}\quad
     \subfigure[]{\label{fig:Iter-d}
     \includegraphics[scale=0.28]{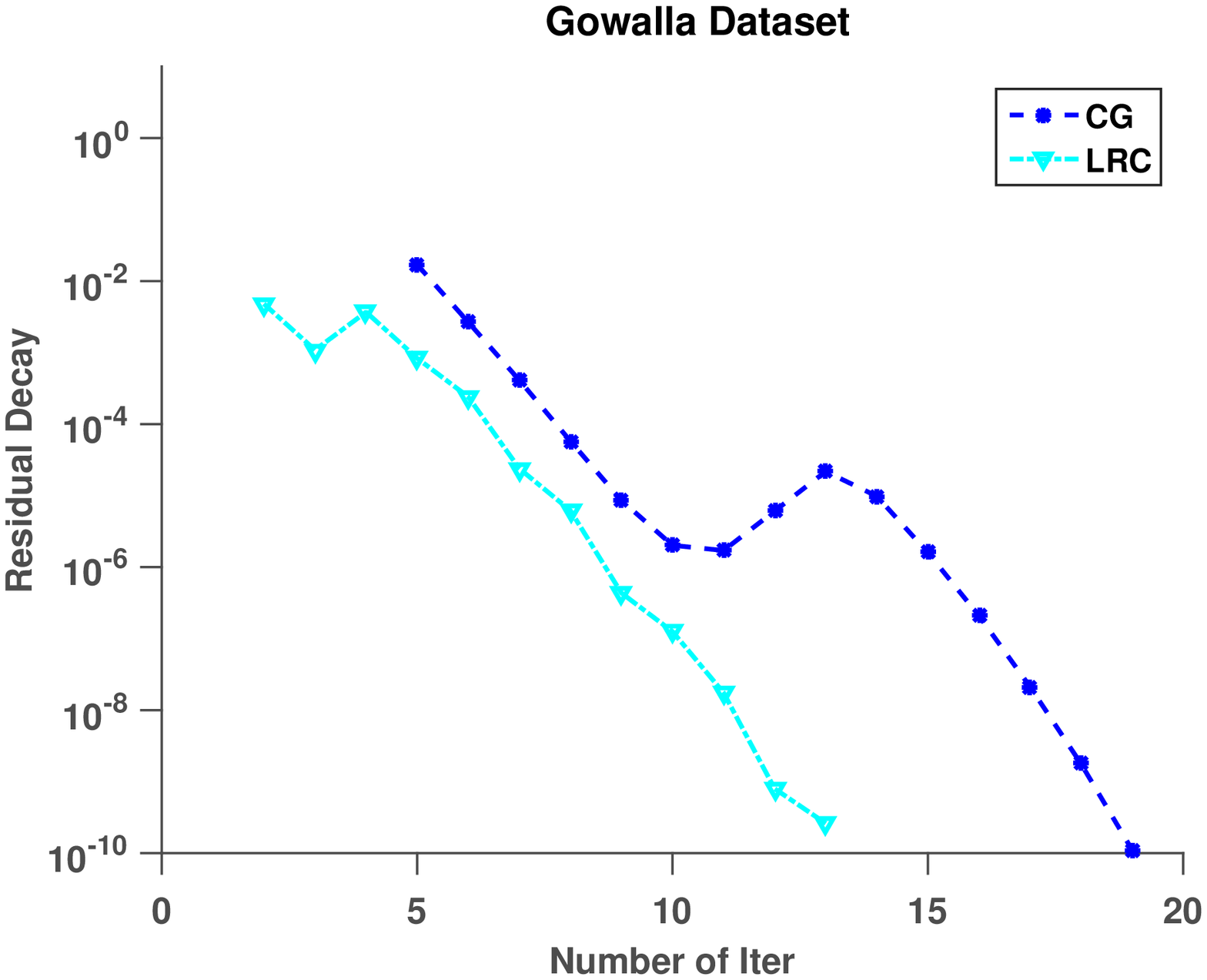}}\quad
     \subfigure[]{\label{fig:Iter-e}
     \includegraphics[scale=0.28]{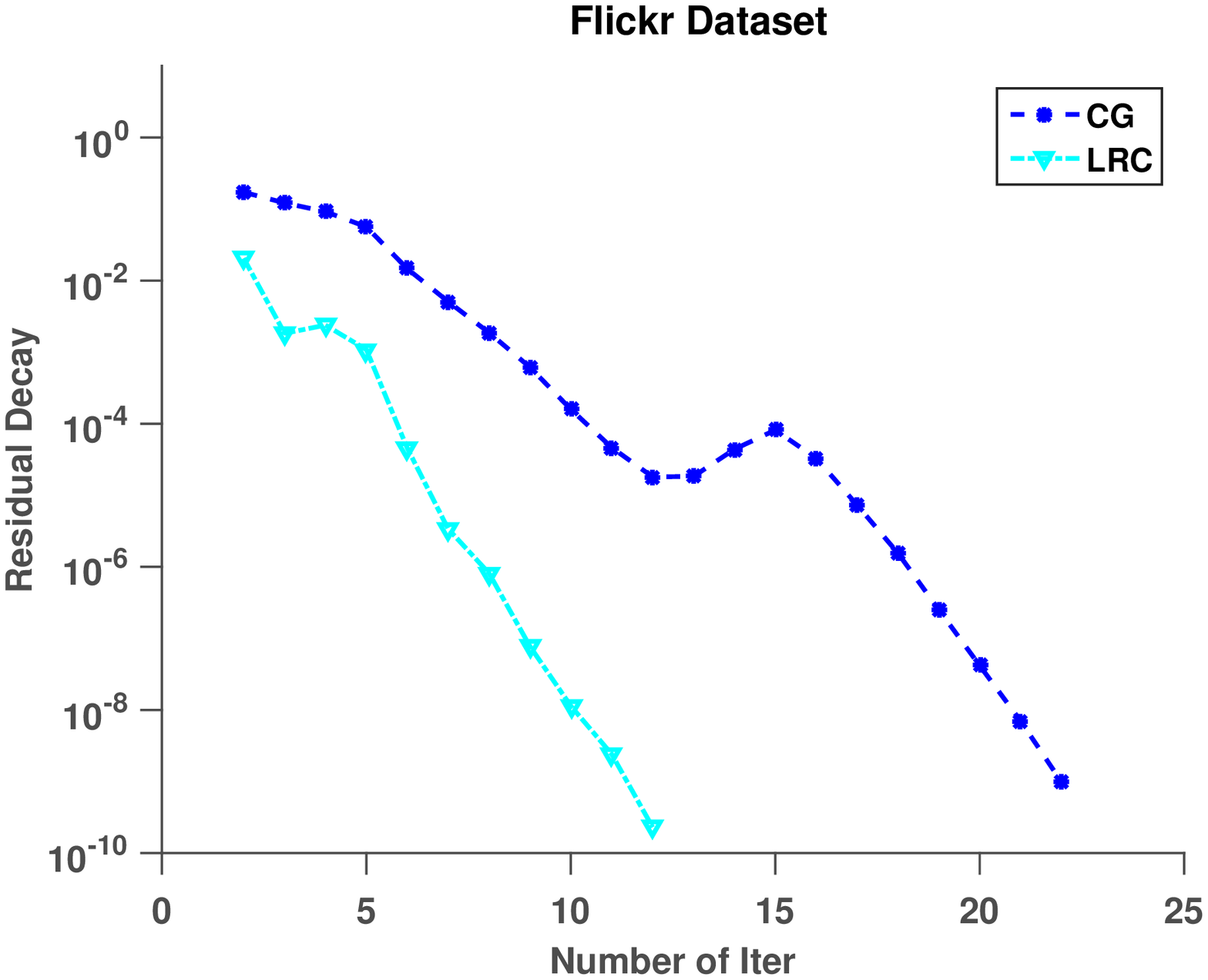}}
     \subfigure[]{\label{fig:Iter-f}
     \includegraphics[scale=0.28]{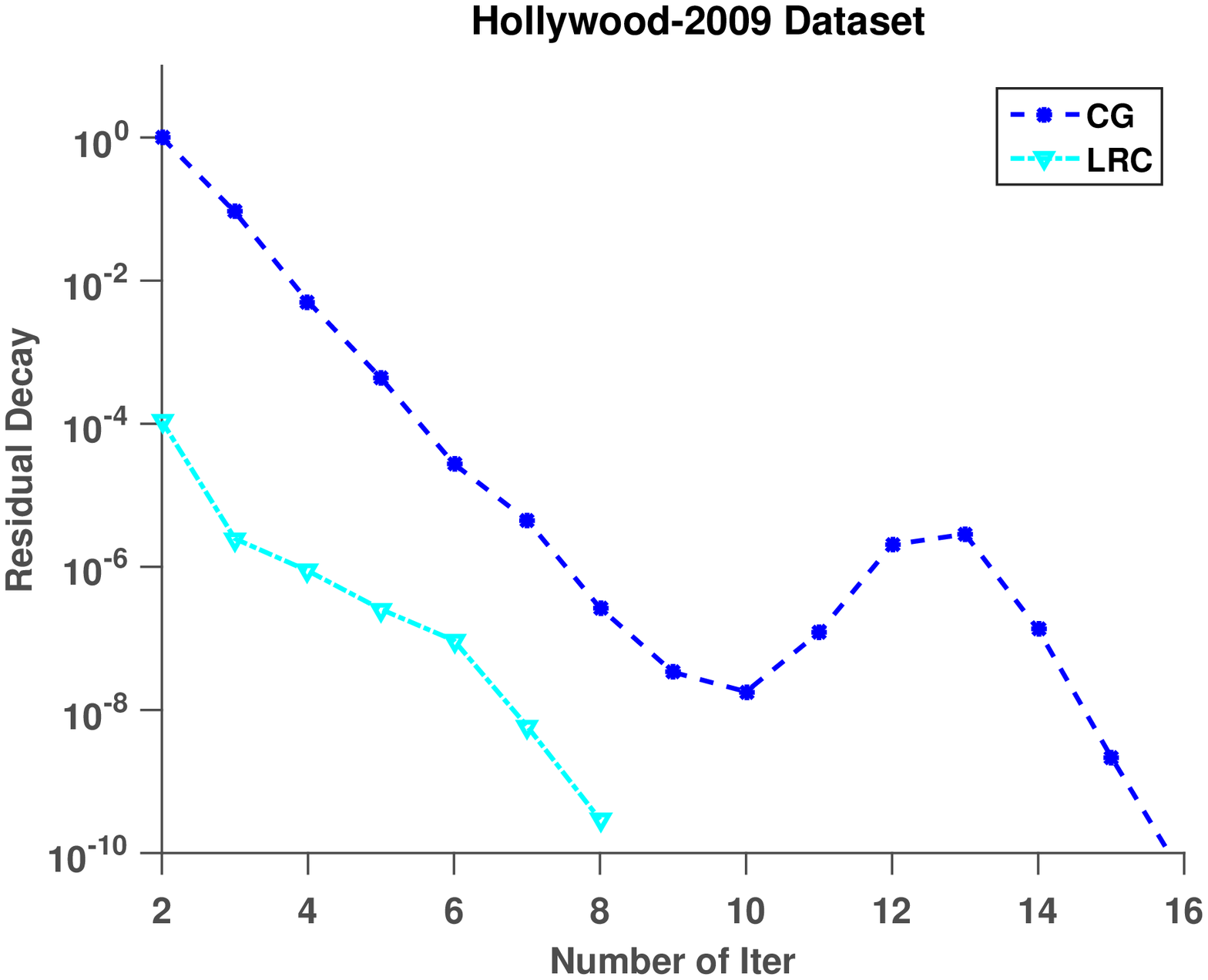}}
  \end{center}
 \caption{\textbf{ The number of iterations required for \algo\ and CG in computing Katz proximity scores}
 In these experiments, the reported numbers are the averages across $1000$ randomly chosen query nodes.}
  \label{fig:Iter}
\end{figure*}

\section{Experimental Results}
\label{sec:results}

\begin{table*}[tp]
\caption{Network data sets used in the experiments}
\label{datatable}
\centering
\resizebox{0.99\textwidth}{!}{%
\begin{tabular}{|c| c c c c|}
\hline
Network &  Number of Nodes &  Number of Edges &  Average Node Degree &  $\norm{G}_2$  \\
\hline\hline
 {\tt DBLP\_lcc} & 93,156 &  178,145 &  3.82  & 39.5753  \\
 {\tt Arxiv\_lcc} & 86,376  &  517,563 & 11.98  & 99.3319  \\
 {\tt Email-Enron} & 36,692 & 183,831  & 10.02 & 111.2871  \\
 {\tt Gowalla} & 196,591 & 950,327  & 9.67 & 169.3612  \\
 {\tt Flickr} & 513,969 & 3,190,452  & 12.41 & 663.3587  \\
 {\tt Hollywood-2009} & 1,139,905 &  113,891,327  &  99.13 &  2247.5591  \\ \hline\hline
 {\tt PPI \_Data} & 12,976  & 99,814  & 7.6916  & 94.4121\\
 {\tt DBLP\_2006-2008} &  179,266 & 765,346 & 4.2693 &  61.7765\\
\hline
\end{tabular}
}
\end{table*}

\begin{figure}[ht]
  \begin{center}
    \subfigure[]{\label{fig:Alpha-a}
    \includegraphics[scale=0.28]{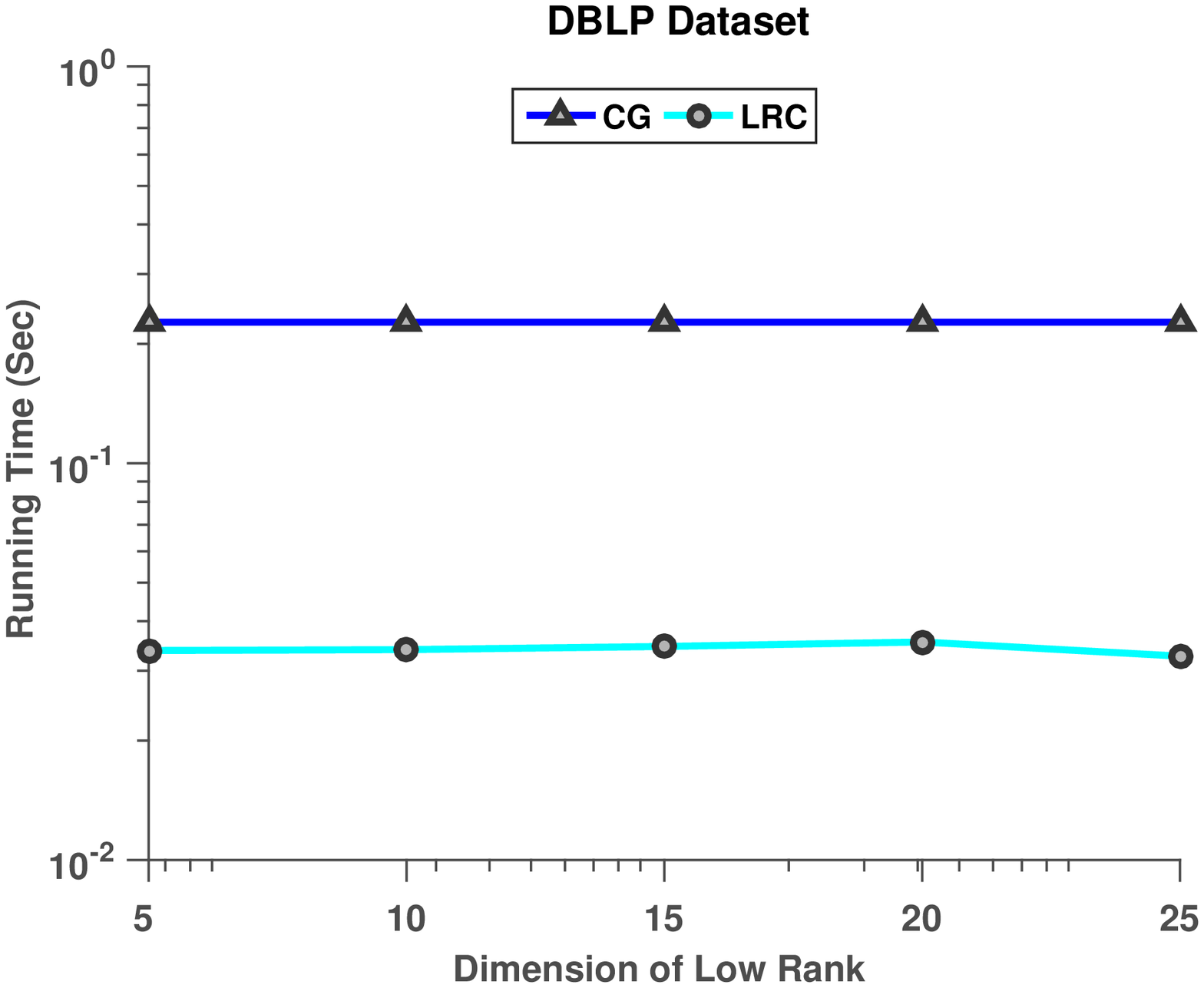}}\quad
    \subfigure[]{\label{fig:Alpha-b}
    \includegraphics[scale=0.28]{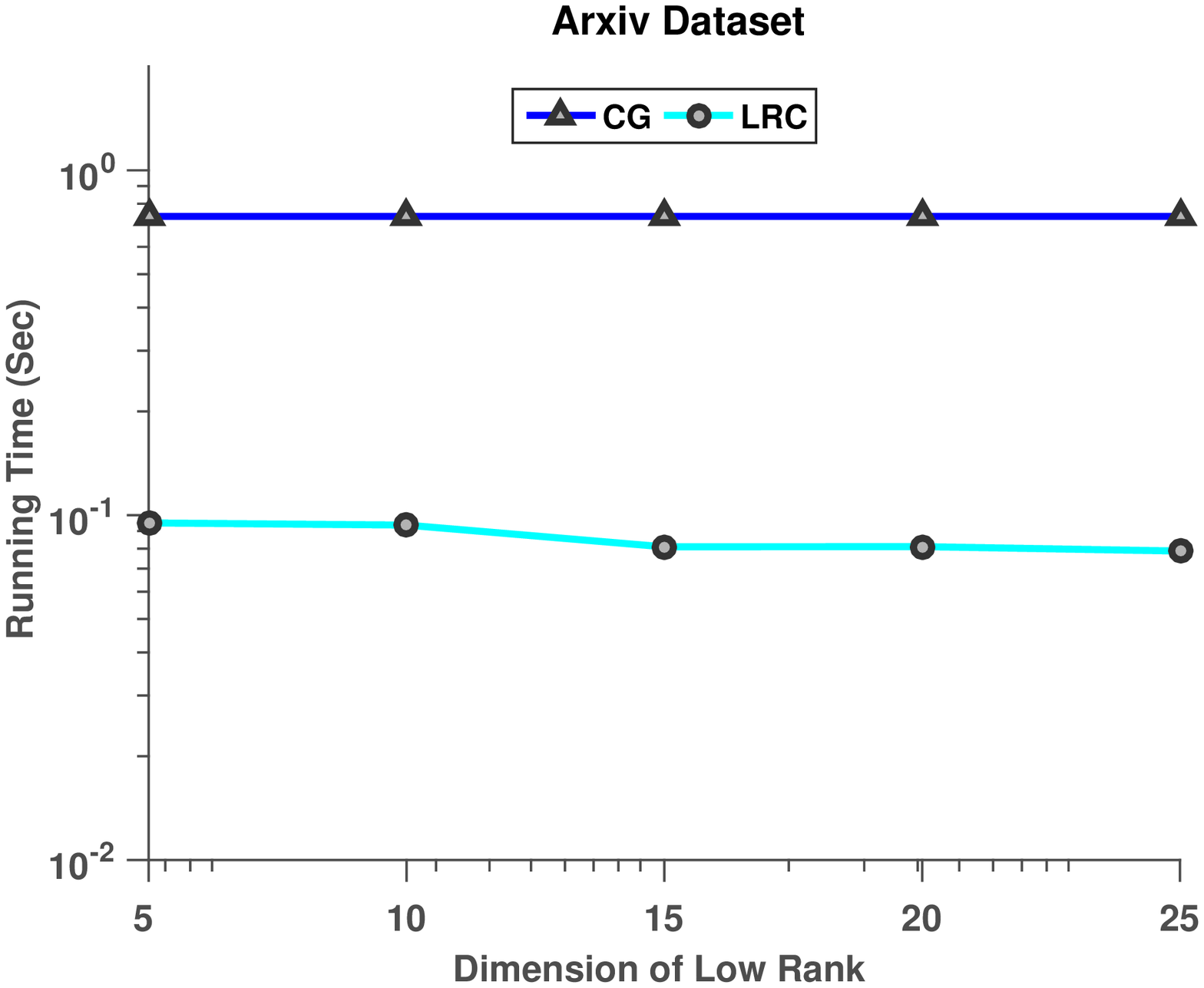}}\quad
     \subfigure[]{\label{fig:Alpha-c}
    \includegraphics[scale=0.28]{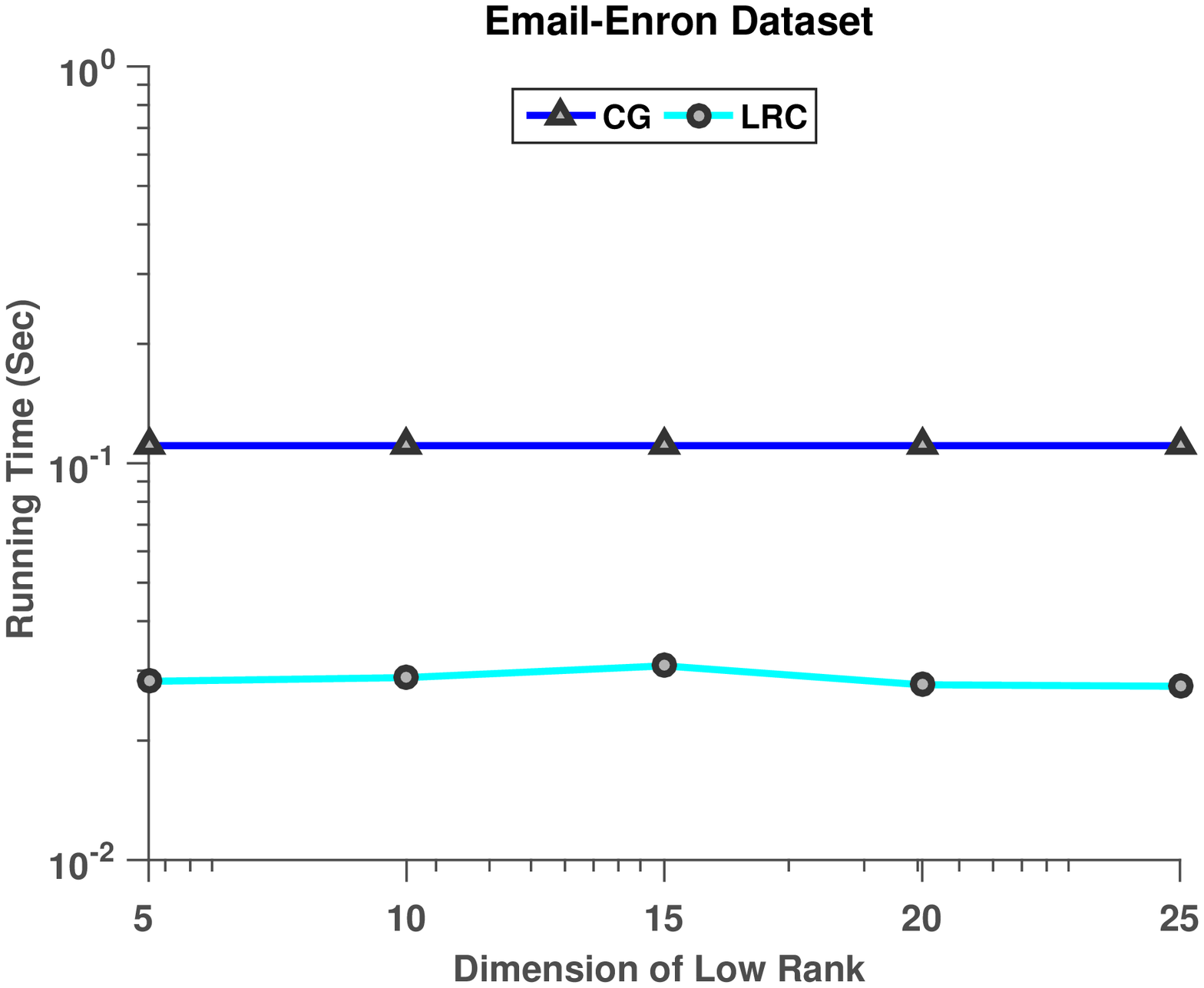}}\quad
     \subfigure[]{\label{fig:Alpha-d}
    \includegraphics[scale=0.28]{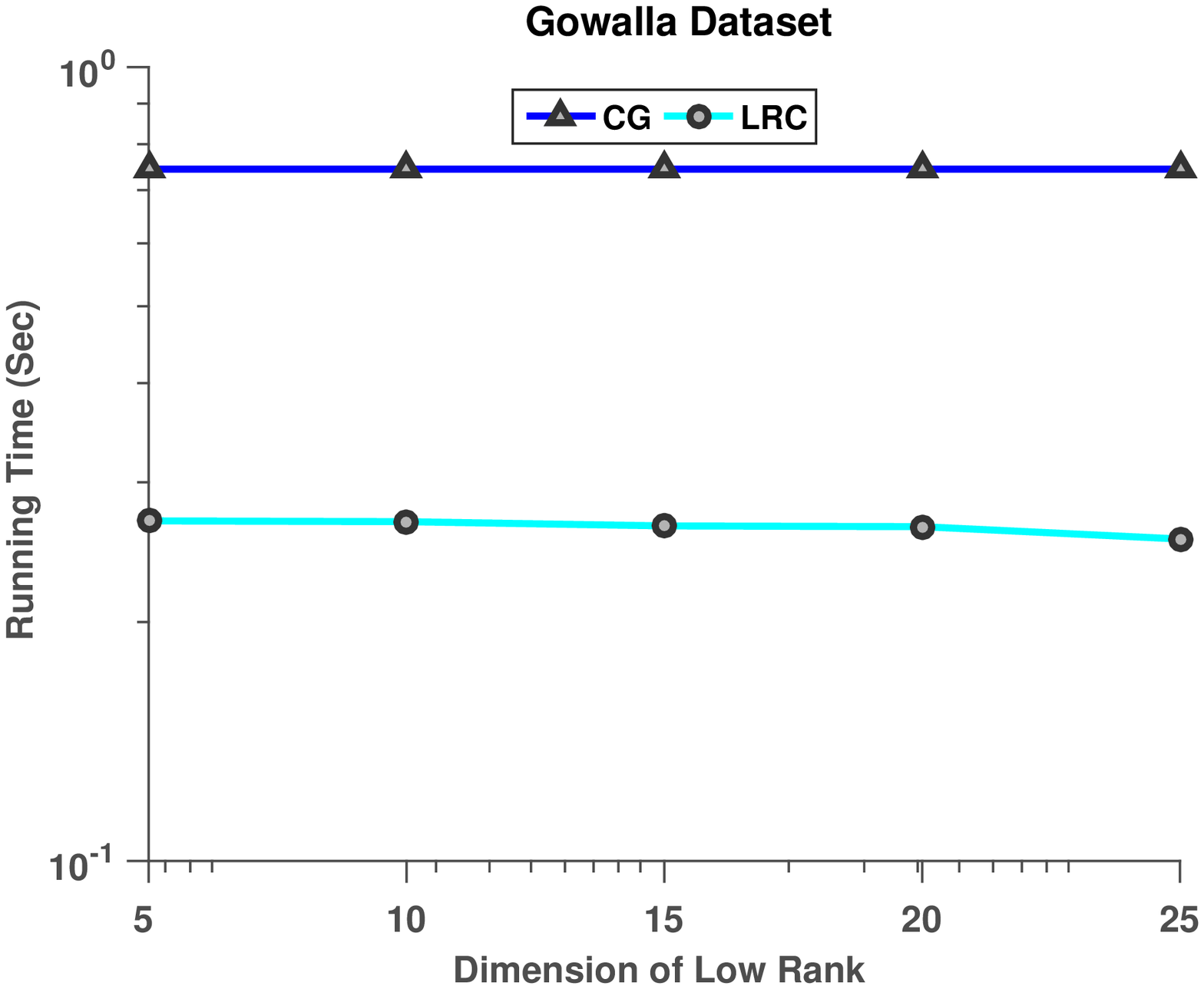}}\quad
     \subfigure[]{\label{fig:Alpha-e}
     \includegraphics[scale=0.28]{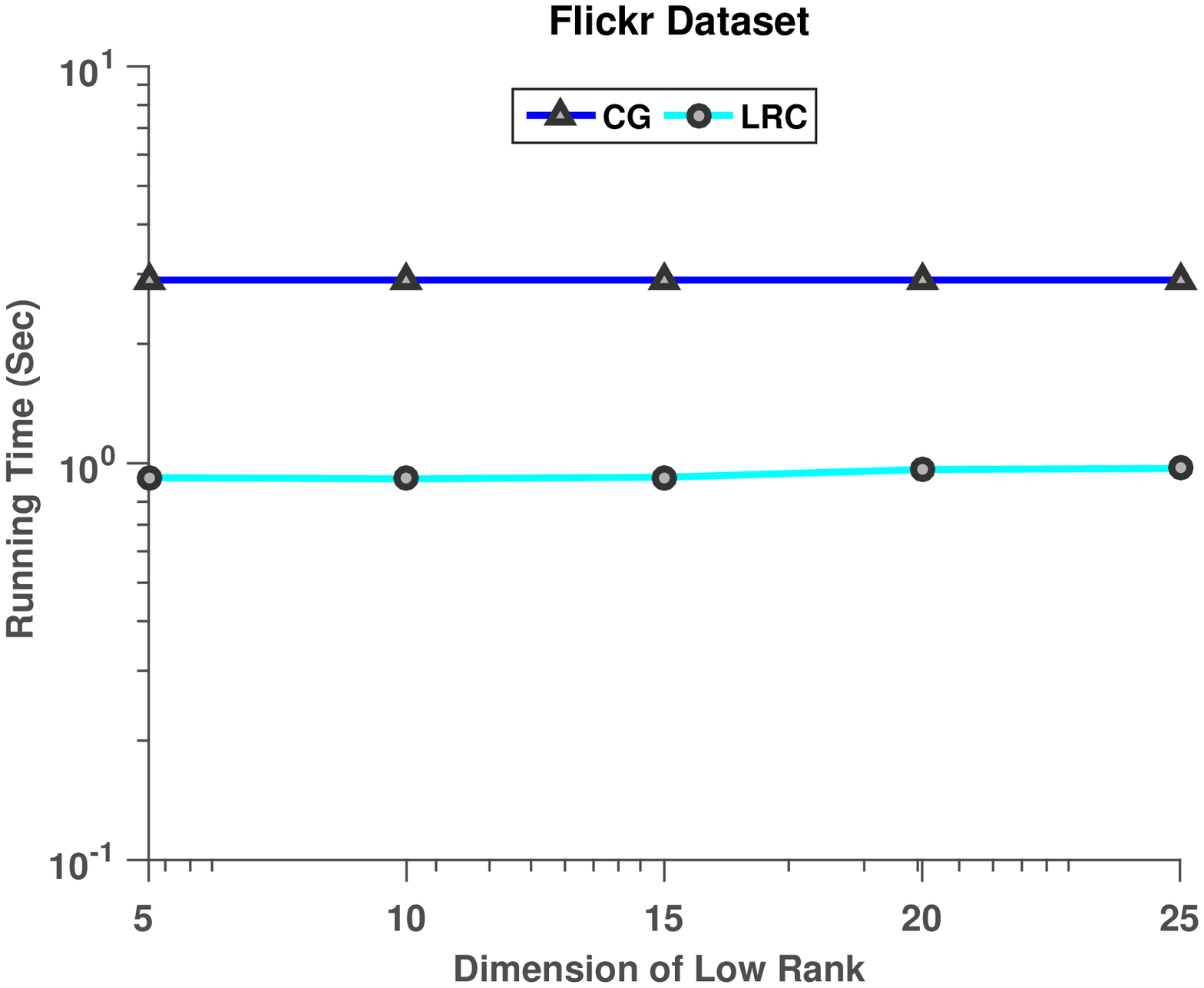}}
      \subfigure[]{\label{fig:Alpha-f}
     \includegraphics[scale=0.28]{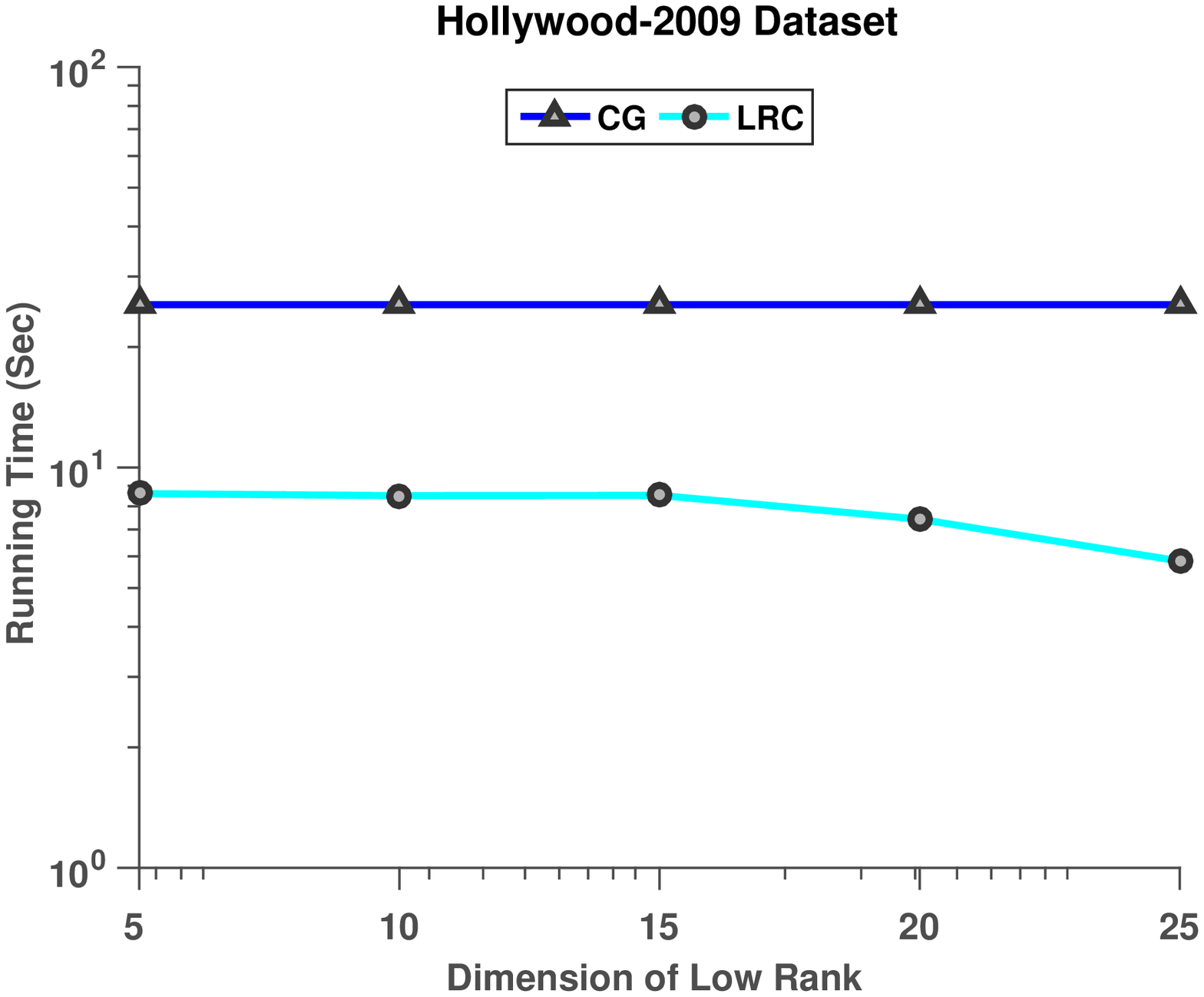}}
  \end{center}
 \caption{\textbf{Runtime in seconds of \algo\ and CG to process queries for Katz proximity as a function of $k$ ranging from $5$ to $25$}
In these experiments the reported numbers are the averages across $1000$ randomly chosen query nodes.}
  \label{fig:RunTime}
\end{figure}

\begin{figure}[ht]
  \begin{center}
    \subfigure[]{\label{fig:DBLP4}
    \includegraphics[scale=0.3]{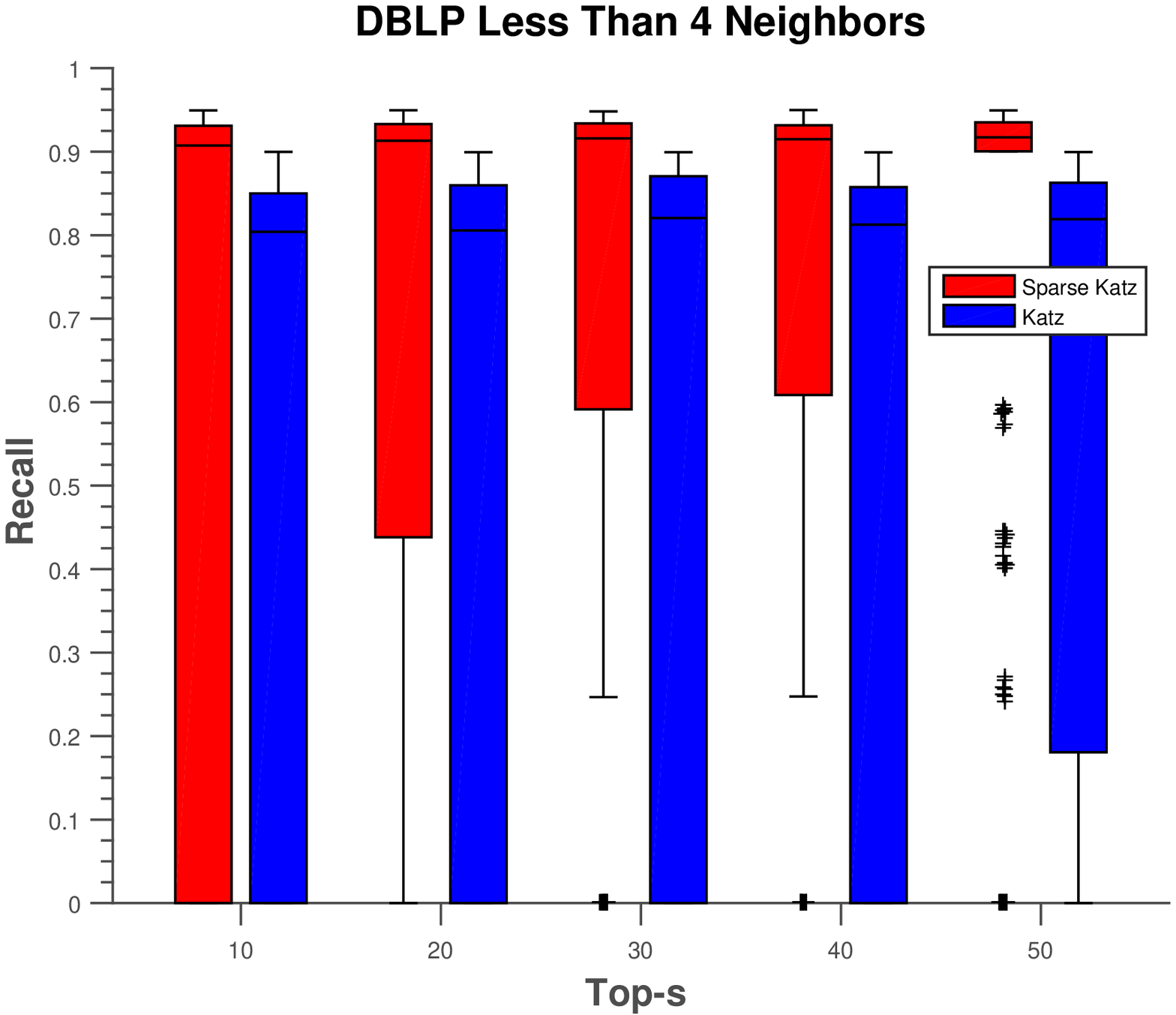}}\hfill
    \subfigure[]{\label{fig:DBLP4_10}
    \includegraphics[scale=0.3]{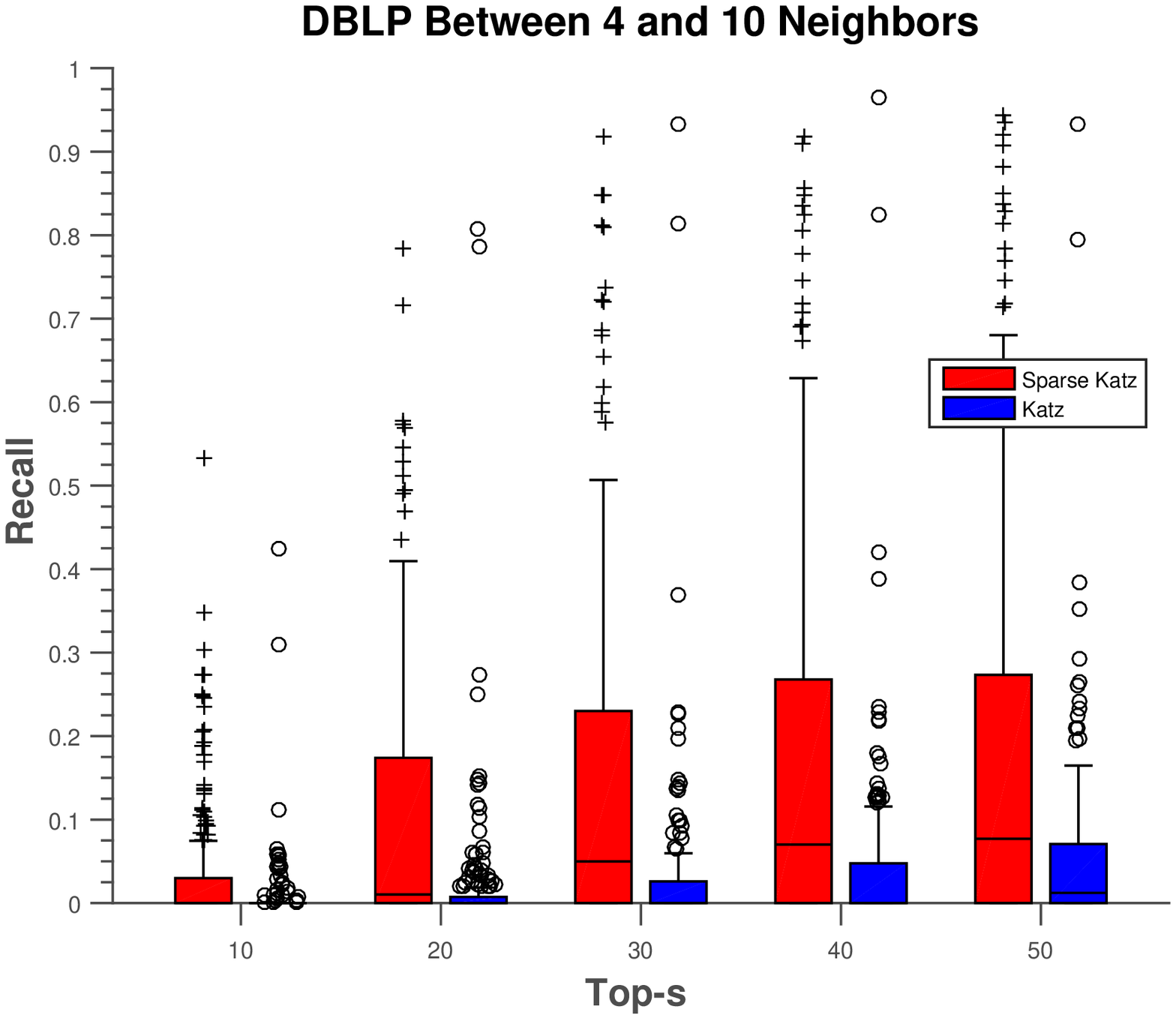}}\hfill
     \subfigure[]{\label{fig:DBLP10}\includegraphics[scale=0.3]{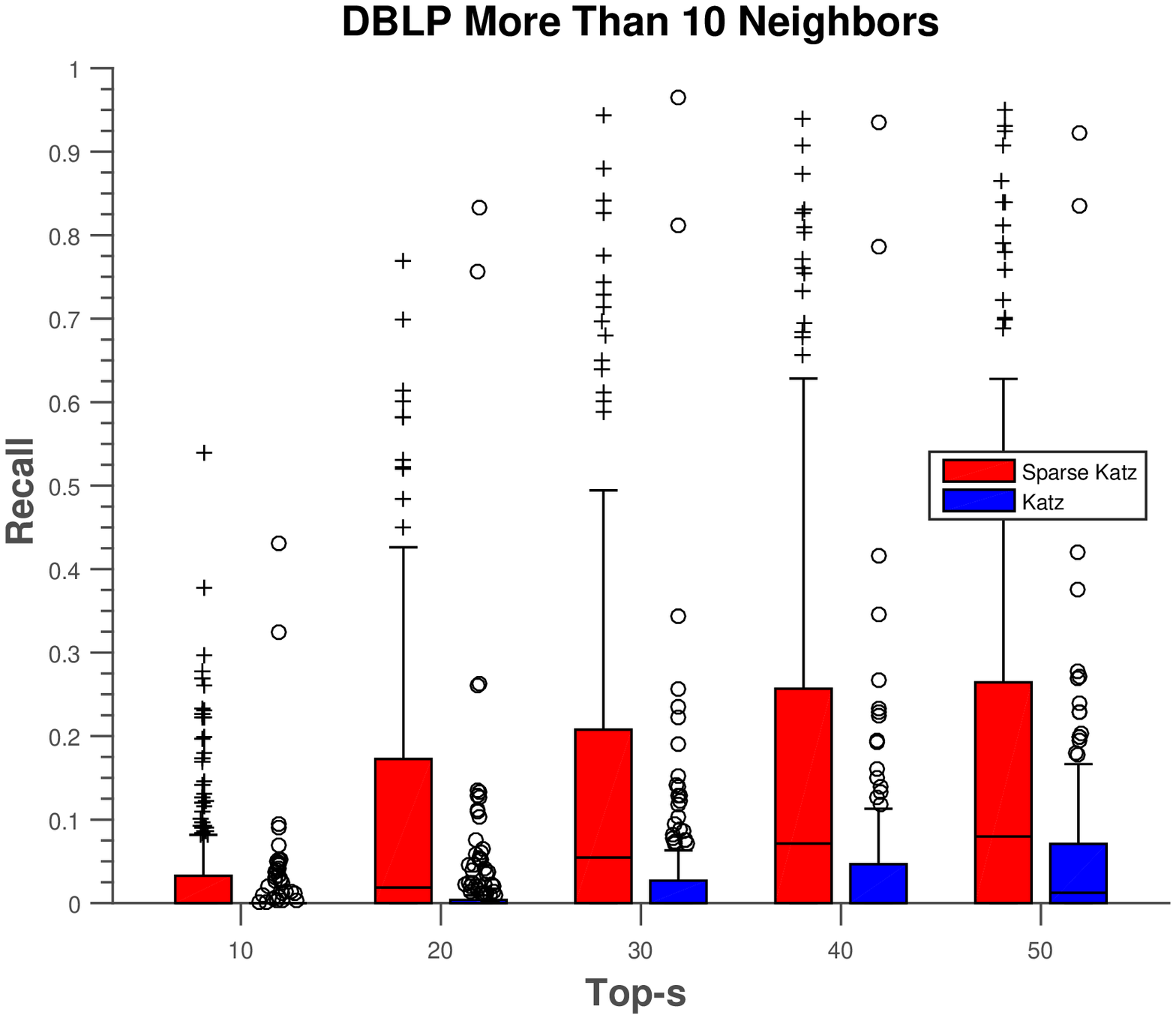}}
  \end{center}
 \caption{\textbf{Performance evaluation of \algoLink\ for  {\tt DBLP\_2006-2008}}
The performance of \algoLink\ link
prediction as
compared to Katz measure based link prediction on the  {\tt DBLP\_2006-2008} }
  \label{fig:DBLPLinkPredictionFig}
\end{figure}

\begin{figure*}[ht]
  \begin{center}
    \subfigure[]{\label{fig:PPI4}
    \includegraphics[scale=0.28]{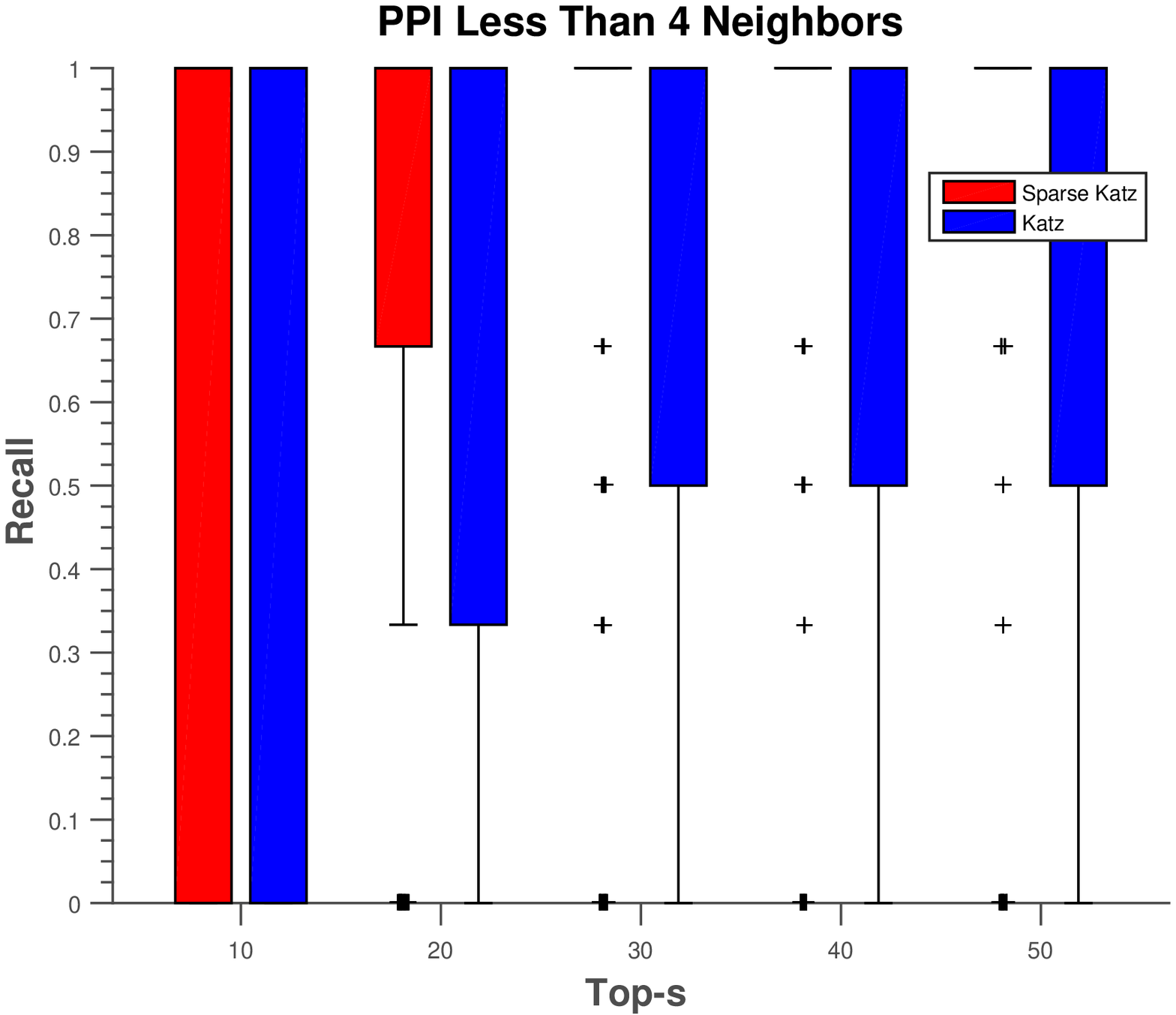}}\quad
    \subfigure[]{\label{fig:PPI4_10}
    \includegraphics[scale=0.28]{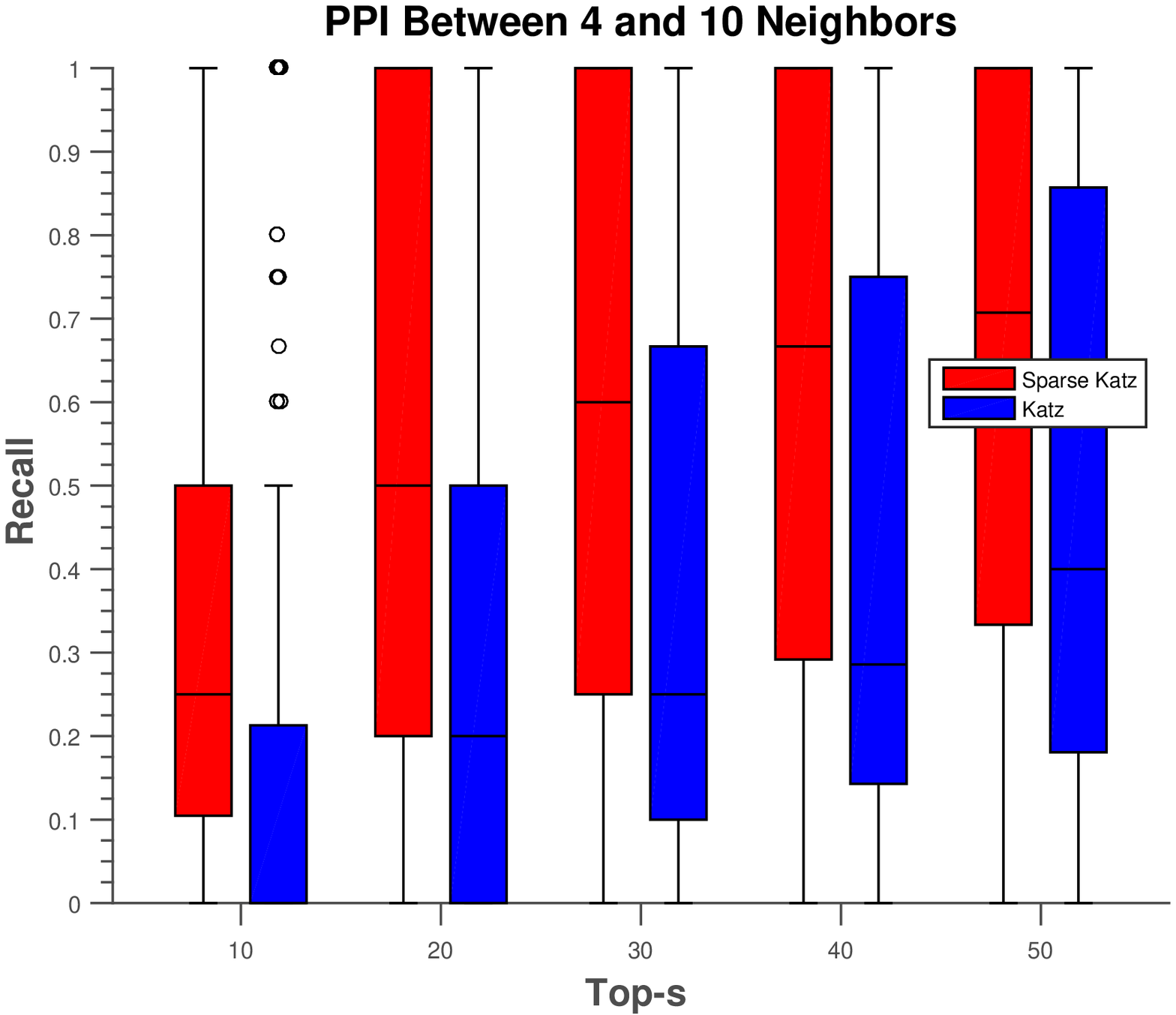}}\quad
     \subfigure[]{\label{fig:PPI10}\includegraphics[scale=0.28]{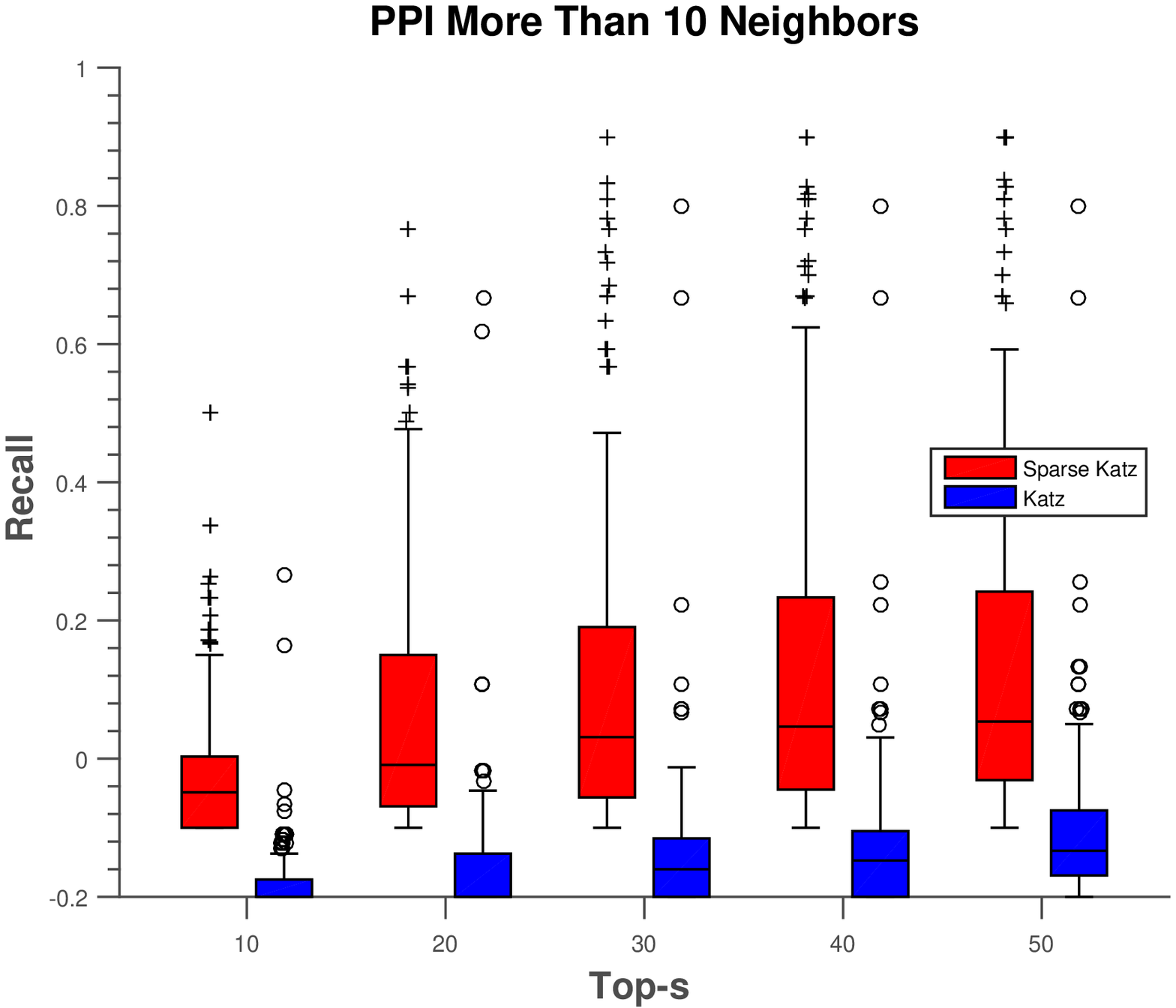}}\quad
  \end{center}
 \caption{\textbf{Performance evaluation of \algoLink\ for  {\tt PPI\_Data}}
The performance of \algoLink\ link
prediction as
compared to Katz measure based link prediction on the  {\tt PPI\_Data} }
  \label{fig:PPILinkPredictionFig}
\end{figure*}


In this section, we systematically first evaluate the runtime performance
the proposed algorithm, \algo\, in processing Katz-based proximity queries. As stated
in the preceding section, \algo\ is ``exact" in the sense  that it is guaranteed to
correctly identify Katz scores of a given query node. For
this reason, we focus on computational cost (measured in terms of number of iterations
and runtime) in our experiments for Katz-based proximity and compare \algo\ against another exact algorithm instead of top-$k$ based algorithms \cite{FastKatz}. We do not report the preprocessing time since it takes less than a few minutes even for the largest dataset.

We then evaluate the link prediction performance of \algoLink\ and compare it against Katz measure performance for link prediction. 

We start our discussion by describing the datasets and the experimental setup for Katz-based proximity. Subsequently, we compare the performance of
\algo\ in processing Katz-based network proximity queries with the fastest state-of-the-art algorithm, Conjugate Gradient method \cite{SaadBook}. Finally, we end our discussion by describing datasets for link prediction application of Katz-based proximity and comparing performance of our link prediction algorithm \algoLink\ against Katz score based link prediction algorithm.

\subsection{Datasets and Experimental Setup for Katz-based Proximity}

We use six publicly available real-world network datasets and two graphs we collected ourselves. We provide descriptive statistics of these eight networks in Table ~\ref{datatable}. The first six real-world networks, in our Katz-based proximity experiments. {\tt DBLP\_lcc} and {\tt Arxiv\_lcc} citation-based networks based
on publications databases, and {\tt Flickr} is social network, are publicly provided by \cite{FastKatz}. {\tt Email-Enron} email connection network at Enron and {\tt Gowalla} local social communication network come from SNAP collection \cite{leskovec2010signed}. Beside these datasets, we lastly use publicly available {\tt Hollywood-2009} ~\cite{boldi2011layered} Hollywood movie actor network for Katz-based network proximity experiments. The last two real-world network in Table ~\ref{datatable} refers the link prediction experiment datasets which collected by ourselves. We give detailed description of these two link prediction datasets in the subsequent sections.

In our implementation phase, for  CG algorithm, we use the Matlab implementation downloaded
from \cite{FastKatz}. We implement \algo\ and \algoLink\ in Matlab. We assess the performance of the algorithms for a fixed  damping factor, i.e for each dataset we use the $\alpha$s that defined as the hardest case  $\alpha = \dfrac{1}{\norm{\boldG}_2 +1}$ proximity queries ~\cite{FastKatz}. In practice, using such small $\alpha$ is recommended to fully utilize the information provided by the network \cite{coskun18}. For all experiments to assess the Katz proximity for the first six datasets, we
randomly select $1000$ query nodes and report the average of the performance figures for these $1000$ queries. In all Katz-based proximity experiments, $e_{q}$ is set to the identity vector  for node $q$. All of the experiments are performed on an Intel(R) Xeon(R) CPU E5-46200 2.20 GHz server with 500 GB memory.

\subsection{Runtime Performance for Katz-based Proximity}

The performance of \algo\ in comparison to Conjugate Gradient (CG) algorithm as a function of $k$, which is dimension of low rank eigenvectors, i.e number of top eigenvectors corresponding the top eigenvalues of $R$ matrix. We provide our analysis to six publicly available networks given in ~\cite{FastKatz,leskovec2010signed,boldi2011layered}. We first evaluate the performance of \algo\ against CG in terms of number of iteration for a fixed dimension $k=5$ for randomly chosen $1000$ query nodes. Resulting analysis is reported in Figure~\ref{fig:Iter} as an average iteration number. As it can be seen \algo\ significantly outperforms CG across all datasets.  We then assess the runtime performance of \algo\ in second with varying $k$, low rank dimensions and compare it against CG.As seen in the Figure ~\ref{fig:RunTime},
\algo\ outperforms CG, achieving more than 3-fold speed-up for
all networks. The performance of \algo\ improves as we increase the dimension, however, due to memory requirements of eigenvectors, we do not go beyond very fist eigvectors corresponding the top eigenvalues.

\begin{table*}[tp]
\caption{Positive Test Distribution of Link Prediction Datasets}
\label{datatableLinkPrediction}
\centering
\resizebox{0.99\textwidth}{!}{%
\begin{tabular}{|c| c c c |}
\hline
Network &  $\#$ of Edges $\left[1, 3\right]$ & $\#$ of Edges $\left[4, 10\right]$ & $\#$ of Edges $>10$ \\
\hline\hline
 {\tt DBLP\_2006-2008} & 194,474 & 97,237 & 16,873\\
 {\tt PPI\_Data} & 28,490  & 7,355  & 3,649 \\
\hline
\end{tabular}
}
\end{table*}

\subsection{Datasets and Experimental Setup for Link Prediction}
We test and compare the proposed algorithm \algoLink\ on two comprehensive datasets: one of which is real-world collaboration networks extracted from DBLP Computer Science Bibliography  \footnote{http://www.informatik.uni-trier.de/ley/db/}, and the other dataset consists of human protein-to-protein interaction (PPI) data obtained from the IntAct database \cite{orchard2013mintact}.

In the DBLP dataset {\tt DBLP\_2006-2008}, for training data, we consider authors who have published papers between 2006 and 2008. In this network, the authors are represented by nodes and there is a undirected link if two authors published at least one paper from 2006 to 2008 and, as test data, we use new co-author links that emerge between 2009 and 2010. In PPI dataset {\tt PPI \_Data}, for training data, we use proteins whose interactions are known in 2014, as test data, we use the proteins whose interactions are discovered in 2016, however, they have not been known in 2014. Here, proteins are used as nodes and interactions among them represent the edges. These datasets' descriptive static are provided in the last two rows of Table \ref{datatable}.

The objective of link prediction is to predict links that will emerge in the network in the future. For this reason, a  positive label in this setup refers to a new link that emerges
in the future version of a network, whereas a negative label refers to two nodes that remain unconnected in the future version. Since the real-world networks are highly sparse, the number of negative pairs is much larger than the number of positive pairs. Hence, predicting negative labels are relatively straightforward. Instead, the remaining challenge is to predict positive labels. Therefore, in our experiment, we only focus on predicting positive labels correctly and report a \textit{recall} evaluation rather than considering positive and negative labels together and reporting a \textit{recall-precision} evaluation. For this reason, we divide the positive labels into three categories to evaluate them . Namely, for DBLP data set, let $\netW$ denote the set of authors who published at least one paper in the testing interval [2009, 2010], but have not published together in the training interval ([2006, 2008]). We construct our positive node pairs from this set as follows:

\begin{itemize}
\item The positive test set $\PositiveSet$ is composed of $u, v \in \netW$ such that $u$ and $v$ published a paper between 2009 and 2010.
\item For DBLP network, $\PositiveSet$ consists of $308584$ added edges in between 2009 and 2010, i.e, $|\PositiveSet| = 308584$.
\item We divide $\PositiveSet$ into three categories as in summarized Table~\ref{datatableLinkPrediction}. 
\item For all experiments for {\tt DBLP\_2006-2008} dataset, we report the mean and the standard deviation of the performance figures
\end{itemize}
With similar logic above, we subsampled the $\PositiveSet = 39494$ sets in PPI network, {\tt PPI\_Data}, from 2016. In our all experiments, we use default hardest $\alpha$ values depending on  $\norm{G}_2$ values of  {\tt DBLP\_2006-2008} and {\tt PPI\_Data}.

\subsection{Link Prediction Performance for \algoLink}

The performance of \algoLink\ in comparison to Katz-based algorithm's link prediction application as a function of $s$, which is the top-s most probable edges that are computed by both algorithms, Katz and \algoLink. We provide our analysis to two networks given in the last two rows of Table \ref{datatable}. We first evaluate the performance of \algoLink\ against Katz in terms of number of \textit{recall} for all the positive set added to {\tt DBLP\_2006-2008} in between 2009 and 2010. Resulting analysis is reported in Figure~\ref{fig:DBLPLinkPredictionFig} as mean and standard deviation of the \textit{recall} values. As it can be seen \algoLink\ significantly outperforms Katz proximity based link prediction across all sub-sampled links.  We then assess the link prediction performance of \algoLink\ for {\tt PPI\_Data} dataset and compare it against Katz measure.As seen in the Figure ~\ref{fig:PPILinkPredictionFig},
\algoLink\ outperforms Katz measure drastically.
\section{Conclusion}
\label{sec:conclusion}
In this paper, we propose an alternate approach to accelerating
Katz- based network proximity queries. The proposed approach
is based on low rank correction of underlying partitioned linear systems of equation derived from Katz matrix. We show that our approach, \algo\ , significantly
decreases convergence times in practice on real-world
problems. Using a number of large real-world networks,
we show that \algo\ outperforms existing the fastest known method, Conjugate Gradient, significantly for wide ranges of parameter values. When integrated with the the fact that Katz scores distributes sparsely in the Katz vector, Chopper yields further improvement
in performance of link prediction task over state-of-the-art Katz measure.
Future efforts in this direction would include incorporation
of other proximity measures into our framework and their
applications. Furthermore, while \algo\ is an ``exact” algorithm
and our experiments focus on runtime performance
for this reason, there also exist approximate methods that
compromise accuracy for improved runtime. Constructing an approximate version of \algo can provide
further insights into the trade-off between runtime and accuracy
in the context of network proximity problems.


\bibliographystyle{spmpsci}      
\bibliography{mybib}   

\end{document}